\newcommand{\mG}{\mathcal{G}}
\newcommand{\bal}{\begin{equation}\begin{aligned}}
\newcommand{\eal}{\end{aligned}\end{equation}}
\newcommand{\GKP}{{\rm GKP}}
\newcommand{\STAB}{{\rm STAB}}
\newcommand{\DV}{{\rm DV}}
\newcommand{\CV}{{\rm CV}}
\newcommand{\cell}{{\rm cell}}
\theoremstyle{plain}
\newtheorem{theorem}{Theorem}
\newtheorem{proposition}{Proposition}
\theoremstyle{definition}
\begin{document}

\title{
Bridging magic and non-Gaussian resources via Gottesman-Kitaev-Preskill encoding}

\author{Oliver Hahn}%
\email{hahn@g.ecc.u-tokyo.ac.jp}
\affiliation{Wallenberg Centre for Quantum Technology, Department of Microtechnology and Nanoscience, Chalmers University of Technology, Sweden , SE-412 96 G\"{o}teborg, Sweden}
\affiliation{Department of Basic Science, The University of Tokyo, 3-8-1 Komaba, Meguro-ku, Tokyo, 153-8902, Japan}

\author{Giulia Ferrini}
\affiliation{Wallenberg Centre for Quantum Technology, Department of Microtechnology and Nanoscience, Chalmers University of Technology, Sweden , SE-412 96 G\"{o}teborg, Sweden}

\author{Ryuji Takagi}%
\email{ryujitakagi.pat@gmail.com}
\affiliation{Department of Basic Science, The University of Tokyo, 3-8-1 Komaba, Meguro-ku, Tokyo, 153-8902, Japan}

\begin{abstract}
Although the similarity between non-stabilizer states---also known as magic states---in discrete-variable systems and non-Gaussian states in continuous-variable systems has widely been recognized, the precise connections between these two notions have still been unclear. We establish a fundamental link between these two quantum resources via the Gottesman-Kitaev-Preskill (GKP) encoding. We show that the negativity of the continuous-variable Wigner function for an encoded GKP state coincides with a magic measure we introduce, which matches the negativity of the discrete Wigner function for odd dimensions. We also provide a continuous-variable representation of the stabilizer R\'enyi entropy---a recent proposal for a magic measure for multi-qubit states. 
We further provide the magic measure with an operational interpretation as a runtime of a classical simulation algorithm. In addition, we employ our results to prove that implementing a multi-qubit logical non-Clifford operation in the GKP code subspace requires a non-Gaussian operation even at the limit of perfect encoding, despite the fact that the ideal GKP states already come with a large amount of non-Gaussianity.
\end{abstract}

\maketitle

\section{Introduction}
\label{sec:introduction}

The difference between what constitutes quantum and classical physics is often hard to grasp. 
The hope is to leverage quantum mechanics in order to get a computational speed-up when using quantum computing compared to classical computation.
Finding and pinpointing the origins of the speed-up or what property allows for such a phenomenon is still an open problem. Aside from an academic interest, this undertaking would allow us to identify and quantify what resources are required to do a certain computational task.
This fact becomes even more important, as in reality every quantum information processing task will be restricted in a certain way given that it will be implemented in a physical system, subjected to experimental constraints.

One of the promising paradigms for quantum information processing assumes infinite dimensional Hilbert spaces associated with observable possessing a continuous-variable spectrum. Such a model can  be implemented using for example quantum optical systems~\cite{konno2024logical}, as well as  microwave cavities coupled to superconducting systems~\cite{campagne2020quantum, sivak2023real} or trapped ions~\cite{Fl_hmann_2019, de2022error}, among others. In such systems, non-Gaussian components have been identified as necessary resources for quantum computational advantages, as computation solely run by Gaussian resources can be efficiently simulated classically~\cite{mari2012positive,Veitch_2013,PhysRevLett.88.097904, PhysRevX.6.021039}. 
Such non-Gaussian features in e.g. quantum states can be quantified by several measures of non-Gaussianity~\cite{Genoni2007measure,Genoni2008quantifying,takagi2018convex, albarelli2018resource,Chabaud2020stellar,Regula2021operational, Lami2021framework}, among which the negativity of Wigner function~\cite{Kenfack2004negativity,takagi2018convex, albarelli2018resource} has been known as a computable measure that is necessary for exponential quantum advantage~\cite{Veitch_2013}.  

The other paradigm for quantum information processing assumes discrete-variable systems, in which quantum information is encoded in finite-dimensional Hilbert spaces, and can be implemented in a vast amount of platforms~\cite{ladd2010quantum}. 
Among many relevant quantum resources needed for efficient quantum information processing in discrete-variable systems, one peculiar quantity necessary for quantum speedup is the non-stabilizerness~\cite{bravyi2005universal}, also known as quantum magic, which stems from the fact that quantum circuits only consisting of stabilizer states and Clifford operations can be efficiently simulable by classical computers~\cite{Gottesman1998Heisenberg}. 
Interestingly, for odd-dimensions the magicness of discrete-variable states can also be studied by looking at a discrete version of the Wigner function~\cite{gross2005finite,gross2006hudson} analogously to the case of continuous-variable systems.
Indeed, the negativity of the discrete Wigner function~\cite{Veitch2012negative,veitch2014resource} has been shown to be a valid magic measure when the underlying Hilbert space has odd dimensions.  
For even dimensions, one needs to consider other meausures~\cite{howard2017application,bravyi2016improved, bravyi2016trading,Regula2018convex,bravyi2019simulation,Bu2019efficient,beverland2020lower,seddon2021quantifying, leone2022stabilizer,hahn2022quantifying,Haug2023scalable,Bu2023stabilizer,Haug2023efficient,Bu2024entropic}, as there is no known quasiprobability distribution that easily connects to magic.

Although some conceptual similarities between non-Gaussianity and magic resources have been observed~\cite{gross2006hudson,Bu2023quantum,Bu2023discrete,Bu2024entropic}, the direct quantitative connection between these two resources has still been elusive. 
In particular, constructing a map between magic and non-Gaussianity would strengthen the relation between two main operational frameworks that are important for quantum computing and provide a novel approach where one resource could be analyzed by employing tools developed for analyzing the other.

In this work, we accomplish this mapping by finding a fundamental relation between the discrete and continuous-variable systems via the Gottesman-Kitaev-Preskill (GKP) encoding~\cite{gottesman2001encoding}, which is one of the most promising error-correcting codes for continuous-variable systems.
We introduce a family of distributions for discrete-variable systems and show that their $l_p$-norm exactly corresponds to that of the continuous-variable Wigner function for the GKP state encoding the original discrete-variable qudit. Specifically for odd dimension, the $l_1$ norm of the qudit distributions yields the negativity of the associated Wigner function for both discrete and continuous-variable settings. 
The connection is even stronger as the continuous Wigner function can be directly represented using the discrete Wigner function of the encoded state. 
On the other hand, our distributions yield a magic measure for all dimensions, which encompasses the negativity of the discrete Wigner function defined for odd dimensions and the stabilizer R\'enyi entropy~\cite{leone2022stabilizer} defined for multi-qubit systems in a unified manner.
These results allow us to provide a fundamental and quantitative relation between magic and non-Gaussianity.
Our results, therefore, allow for recovering and significantly extending a recent finding of the relation between multi-qubit systems and continuous-variable systems~\cite{hahn2022quantifying}.

In addition to the Wigner function, we also find that the  $l_p$-norm of the discrete-variable characteristic function (which corresponds to the coefficients of the generalized Pauli operators) exactly corresponds to the one of continuous-variable characteristic functions of GKP-encoded states. 
This provides a new interpretation of the stabilizer R\'enyi entropy---which is precisely defined by the $l_p$ norm of the Pauli coefficients---in terms of GKP encoding, and naturally extends it to all dimensions.

Furthermore, we obtain operational insights from our framework. 
We first show that our magic measure---which naturally emerges as the quantity corresponding to non-Gaussianity---is equipped with an operational meaning as a runtime of a classical simulation algorithm, extending the previous approaches of classical simulation based on quasiprobability distributions~\cite{pashayan2015estimating,rall2019simulation} to all dimensions. Furthermore, we introduce a simulation algorithm for GKP circuits that allows for all Gaussian unitaries as well as arbitrary qudit states encoded in GKP.
In addition, we apply our results to prove that the deterministic implementation of a logical non-Clifford operation with the same input and output systems in the GKP code subspace requires a non-Gaussian operation even in the limit of ideal GKP input state. 
Since ideal GKP states have unbounded non-Gaussianity, it is not a priori obvious that more non-Gaussianity is needed to apply a logical non-Clifford operation. 
Our result shows that this is actually the case in general, extending an observation for specific non-Clifford gates in multi-qubit systems~\cite{Yamasaki2020cost-reduced} to the general class of non-Clifford gates on all dimensions.

\section{Preliminaries}
\label{sec:preliminaries}

Here we briefly review the relevant formalism for discrete- and continuous-variable quantum computing.

\subsection{Discrete variables}
Qubits are ubiquitous in quantum information processing and are $d=2$ level systems.
Qudits are an intuitive generalizations to $d$ dimensions.
A general pure qudit state is defined as
\begin{align}
\ket{\psi}=\sum_{i=0}^{d-1} \alpha_i\ket{i}
\end{align}
with normalization condition $\sum_{i=0}^{d-1}\abs{\alpha_i}^2=1$ and $\ket{i}$ a computational basis state.
The Pauli group can be defined for arbitrary dimensions in analogy to the qubit case as
$\mathcal{P}_d=\qty{\omega_D^u X_d^v Z_d^w: v,w\in \mathds{Z}_d, u\in \mathds{Z}_D}$ where $\omega_d=e^{2\pi i/d}$ is the $d$\,th root of unity  and 
\begin{align}
    D=\begin{cases}
        d: & \text{for \textit{d} odd}\\
        2 d: & \text{for \textit{d} even}\\
    \end{cases}
    \label{eq:D definition}
\end{align}
with $\mathds{Z}_d$ being the integers modulo $d$.
The $d$ dimensional Pauli operators $Z_d, X_d$, sometimes also called clock and shift operators, are a way to generalize the qubit Pauli operators $Z_2, X_2$ and are defined as
\begin{align}
        X_d=& \sum_{j=0}^{d-1}\ket{j+1}\bra{j}\\
    Z_d=&\sum_{j=0}^{d-1}\omega_d^{j}\ket{j}\bra{j}
\end{align}
with the property $X_d^d= Z_d^d=\mathds{1}$ ~\cite{farinholt2014ideal}.

We use the generalized Pauli operators $X_d, Z_d$ to define the $d$-dimensional Heisenberg-Weyl operators as~\cite{gross2006hudson,gross2005finite}
\begin{align}
    P_d(a,b)= \omega_{d}^{\frac{1}{2} a b  } X_d^a Z_d^b
\end{align}
with $a,b\in \mathds{Z}_d$. Note here that for odd dimensions $\frac{1}{2}=2^{-1}$ is the multiplicative inverse on $\mathds{Z}_d$ while for even dimensions it means $ \omega_{d}^{\frac{1}{2}}= e^{i\pi/d}$, as there is no multiplicative inverse of 2.
Note that if we write $e^{i(\frac{\pi}{d})ab}$, the product $a b$ is taken in $\mathds{Z}$ rather than $\mathds{Z}_d$.
Then, the commutation relations are
\begin{align}
    P_d(a,b) P_d(c,d)= \omega_d^{(a,b)\Omega(c,d)^T}P_d(c,d) P_d(a,b),
\end{align}
where
\bal
\Omega=\begin{pmatrix}
0&-1\\1&0
\end{pmatrix}
\eal
is the symplectic form.

For $n$-qudit systems, the Heisenberg-Weyl operators are written by 
\bal
 P_d(\bm{u})= \otimes_{i=1}^n P_d(a_i,b_i)
 \label{eq:generalized Pauli definition}
\eal
with $\bm{u}=(\bm{a},\bm{b})\in\mathds{Z}_d^{2n}$, which satisfy the orthogonality relation  
\begin{align}
    \Tr\qty[P_d(\bm{u}) P_d^\dagger(\bm{v})]=d^n \delta_{\bm{u},\bm{v}}.
\end{align}
The $d$ dimensional $n$ qudit Clifford group is generated by the following unitary operations:
\bal
    R=& \sum_{j,s=0}^{d-1}\omega_d^{js}\ket{s}\bra{j}\\
    P=& \sum_{j=0}^{d-1}\omega_d^{j^2/2} (\omega_D \omega_{2d}^{-1})^{-j}\ket{j}\bra{j}\\
    \text{SUM}=&\sum_{i,j=0}^{d-1}         \ket{i}\bra{i}\otimes\ket{i+j } \bra{j}.
\eal
For $d=2$, these operators reduce to the Hadamard, Phase, and CNOT gate~\cite{farinholt2014ideal} respectively.
A Clifford unitary $ U_C$ acts on the Heisenberg-Weyl operator in a simple way
\begin{align}
U_C P_d(\bm{u})U_C^{\dagger}= e^{i\phi }P_d(S\bm{u})    
\end{align} 
where $S \in \text{SP}(2n,\mathds{Z}_D)$ is a symplectic matrix~\cite{farinholt2014ideal} associated with the Clifford unitary $U_C$,  and $e^{i\phi}$ is some phase factor.

Using the discrete Heisenberg-Weyl operators, we define the characteristic function~\cite{gross2006hudson}
\begin{align}
    \chi_{\rho }^{\DV}(\bm{u})=d^{-n}\Tr\qty[\rho P_d(\bm{u})^\dagger].
\end{align}
Odd-dimensional systems allow for a simple way to define the discrete Wigner function $W_\rho^\DV:\mathds{Z}_d^{2n}\to \mathds{R}$ as the discrete symplectic Fourier transform of the characteristic function
\begin{align}
     W_{\rho }^{\DV}(\bm{u})&=d^{-n} \sum_{\bm{v}\in \mathds{Z}_d^{2n}} \omega_d^{-\bm{u}^T\Omega_n\bm{v}}\chi_{\rho }^\DV(\bm{v})  \\ 
     &=d^{-n}\Tr\qty[A(\bm{u})\rho  ],
     \label{eq:discrete Wigner definition}
\end{align}
where $\Omega_n$ now takes the form 
\bal
 \Omega = \begin{pmatrix}
 0 & -\mathds{1}_n \\
 \mathds{1}_n & 0 
     \end{pmatrix}
\eal
and $\mathds{1}_n$ is the $n\times n$ identity matrix. 
The phase space point operator in Eq.~\eqref{eq:discrete Wigner definition} can be written as
\bal
    A(\bm{u})
    &= d^{-n} \sum_{\bm{v}\in\mathds{Z}_d^{2n}}\omega_d^{-\bm{u}^T\Omega_n\bm{v}}P_d(\bm{v})^\dagger.
    \label{eq:phase space point operator definition}
\eal
The discrete Wigner function $W_{\rho }(\bm{u})$ in odd dimensions has many useful properties. It is covariant under Clifford unitaries $U_C$ meaning that
\begin{align}
\label{eq:Wigner}
     W_{U_C\rho U_C^{\dagger}}^{\DV}(\bm{u}) =  W_{\rho }^{\DV}(S\bm{u})
\end{align}
with $S \in \text{Sp}(2n,\mathds{Z}_d)$ being the symplectic matrix associated with the Clifford unitary $U_C$.
Although it is a quasiprobability distribution and is thus not necessarily positive, $W_{\rho }(\bm{u})$ yields a valid probability distribution for an arbitrary stabilizer state. 
A pure stabilizer state is a state that is uniquely defined by a set of $d^n$ commuting Heisenberg-Weyl operators --the stabilizer group-- of which the state is in the $+1$ eigenspace of all operators in the stabilizer group. A mixed stabilizer state is a convex sum of pure stabilizer states. 
This property was utilized to introduce a magic measure--- a measure for non-stabilizerness--- given by the $l_1$-norm of the discrete Wigner function $W_\rho^\DV$, as
\begin{align}
    \|W_\rho^\DV\|_1= \sum_{\bm{u}}\abs{ W_{\rho }^{\DV}(\bm{u})}.
    \label{eq:negativity Wigner definition}
\end{align}
This quantity is therefore the  negativity of the Wigner function Eq.~\eqref{eq:Wigner}~\cite{veitch2014resource}.
This is a monotone under Clifford operations in the sense of resource theories~\cite{Chitambar2019quantum}, i.e. it is monotonically non-increasing under stabilizer protocols~\cite{veitch2014resource}, which consist of (1) Clifford unitaries (2) composition with stabilizer states (3) Pauli measurements (4) partial trace (5) the above operations conditioned on the outcomes of Pauli measurements or classical randomness. Moreover, it is faithful for pure states, i.e.,  $\|W_\psi^\DV\|_1 = 1$ if and only if $\psi$ is a stabilizer state for an arbitrary pure state $\psi$~\cite{gross2006hudson}.
 
In the following, we also consider the $l_p$-norm of a function $f:\mathds{Z}_d^{2n}\to\mathds{C}$ defined by
\bal
 \|f\|_p = \left(\sum_{\bm{u}\in\mathds{Z}_d^{2n}} |f(\bm{u})|^p\right)^{1/p}
\eal
for a real number $p>0$, therefore yielding a generalization of the negativity.

\subsection{Continuous variables}
The continuous-variable paradigm in quantum information processing uses infinite dimensional systems.
The central observables in these systems are commonly called position $Q$ and momentum $P$ and that fulfill the canonical commutation relations
\begin{align}
    \qty[Q,P]=i.
\end{align}
These operators have continuous spectra, which is why this type of quantum information processing is often called \emph{continuous variables}, in contrast to the \emph{discrete-variable} systems described in the previous section.
For a $n$-mode system, the infinite-dimensional Heisenberg-Weyl operators, also known as displacement operators, are defined by
\begin{align}
    D(\bm{r})= \prod_{j=1}^n e^{i r_{p_j} r_{q_j}/2} e^{-i r_{q_j} P_j} e^{i r_{p_j} Q_j}
\end{align}
where $\bm{r}=(r_{p_1},...,r_{p_n},r_{q_1},...,r_{q_n})= (\bm{r_p},\bm{r_q})$ and $Q_j$, $P_j$ are position and momentum operators for the $j$\,th mode. 
Displacement operators fulfill the commutation relation
\begin{align}
     D(\bm{r})  D(\bm{r'})= e^{-i \bm{r}^T\Omega_n\bm{r'}} D(\bm{r'}) D(\bm{r}).
\end{align}

Similarly to the discrete case, we can use the continuous Heisenberg-Weyl operators to define the characteristic function
\begin{align}
    \chi_{\rho }^{\CV}(\bm{r})=\Tr\qty[\rho D(-\bm{r})]
\end{align}
 and the Wigner function as its symplectic Fourier transform
\bal
    W_{\rho}^{\CV} (\bm{r})&=\frac{1}{(2\pi)^n} \int \dd\bm{r'} e^{i \bm{r}\Omega_n \bm{r'}}     \chi_{\rho }(\bm{r'})   \\ 
    &= \frac{1}{(2\pi)^n} \int_{-\infty}^\infty \dd^n\bm{x} e^{i\bm{r_p} \bm{x}} \bra{Q=\bm{r_q} +\frac{\bm{x}}{2}} \rho\ket{Q=\bm{r_q}-\frac{\bm{x}}{2}}.
\eal
The continuous Wigner function is a quasi-probability distribution and the 
Wigner negativity~\cite{Kenfack2004negativity}
\begin{align}
    \|W_\rho^\CV\|_1 = \int \dd{\bm{r}}\abs{W_{\rho }^{\CV}(\bm{r})} 
    \label{eq:Wigner negativity definition}
\end{align}
can be used as a valid meausure for non-Gaussianity~\cite{takagi2018convex,albarelli2018resource}. 
Similarly to the case of discrete variables, we also consider a $l_p$-norm for a function $f:\mathds{R}^{2n}\to\mathds{C}$ defined by 
\bal
 \|f\|_p = \left(\int d\bm{r} |f(\bm{r})|^p\right)^{1/p},
\eal
which for $p = 1$ gives back the Wigner negativity \eqref{eq:Wigner negativity definition} as the $l_1$-norm of the Wigner function.

A family of states playing a major role in this work are the Gottesman-Kitaev-Preskill (GKP) states~\cite{gottesman2001encoding}. 
They were originally introduced as error correction codes for bosonic quantum systems.
In this work, we employ this encoding as a platform to map magic and non-Gaussian resources for general qudit dimensions, extending a prior result for multiqubit systems~\cite{hahn2022quantifying}.

In the following, we use a subscript to denote a continuous-variable state that encodes a discrete-variable state. For instance, $\rho_\GKP$ refers to a continuous-variable state that encodes the qudit state $\rho$ by the GKP encoding. 
The computational basis state $\ket{j}$ is encoded in the GKP code as an infinite superposition of position eigenstates as
\begin{align}
    \ket{j}_\GKP &= \sum_{s=-\infty}^\infty \ket{Q= \alpha (j+ds)},
\end{align}
which is analogously described by the Wigner function
\begin{equation}\begin{aligned}
    &W_{\dyad{j}_\GKP}^{\CV}(r_p,r_q) \\
    &=\frac{1}{2\pi}\int_{-\infty}^\infty \dd x e^{ir_px}   \psi^j\left(r_q+\frac{x}{2}\right)^*     \psi^j\left(r_q-\frac{x}{2}\right)\\
    &\propto\sum_{s,t=-\infty}^\infty (-1)^{st} \delta\left(r_p-\frac{\pi}{d\alpha}s\right) \delta\left(r_q-\alpha j - \frac{d\alpha}{2}t\right)
    \label{eq:Wigner GKP}
\end{aligned}\end{equation}
with $\alpha = \sqrt{\frac{2\pi}{d}}$, and where $\psi^j\left(x) = \langle Q = x \ket{j}_\GKP \right.$. 
A useful property of the GKP code is that all Clifford unitaries on the code subspace can be implemented using Gaussian unitaries.

As can be seen in Eq.~\eqref{eq:Wigner GKP}, the Wigner function of a GKP state consists of a collection of delta functions.
This comes with an unbounded negativity, which reflects the fact that an ideal GKP state is unnormalizable. 
Nevertheless, one can see that the delta peaks are periodically positioned, with the unit cell having the size $\sqrt{2d\pi}\times \sqrt{2d\pi}$.
This motivates us to consider the $l_p$-norm of a function $f$ considered for a unit cell, defined by 
\bal
 \|f\|_{p,\cell}\coloneqq \left(\int_\cell d\bm{r} |f(\bm{r})|^p\right)^{1/p}
\eal
where $\int_\cell$ refers to the integral over the domain restricted to a hypercube $r_{q_i}\in [0,\sqrt{2d\pi}), r_{p_i}\in [0,\sqrt{2d\pi})$ in the phase space. 

We also note that there is a subtlety when we compute $l_p$-norm of a function that involves a delta function. We describe the procedure to perform such integrals in Appendix~\ref{ap:p-norm}.

\section{Bridging magic and non-Gaussianity}
\label{sec:bridge}
In this section, we present our results that directly connect the resource content of the discrete-variable state with that of the continuous one, establishing a quantitative relation between magic and non-Gaussianity. 
\subsection{Via Wigner function} \label{subsec:Wigner properties}
The first path to connect magic and non-Gaussianity uses the continuous Wigner function.

In order to make this connection, we consider an operator basis for qudit systems.
For $l,m\in\mathds{Z}_{2d}$, let $O_{l,m}$ be an operator defined by 
\bal
 O_{l,m} = e^{-i\pi ml/d}M_l Z_d^m
 \label{eq:operator basis definition}
\eal
where 
\begin{align}
    M_l=\sum_{\substack{u,v \in \mathds{Z}_d \\ u+v =l \mod d}} \ket{u}\bra{v}.
    \label{eq: ml def}
\end{align}
This is proportional to a variant of the phase space point operator studied in Ref.~\cite{Miquel2002quantum}---we will later make this connection more apparent.  
The operator in \eqref{eq:operator basis definition} can easily be extended to $n$-qudit systems, where we define $O_{\bm{l},\bm{m}}=\bigotimes_{i=1}^n O_{l_i,m_i}$ for $\bm{l},\bm{m}\in\mathds{Z}_{2d}^{n}$. 
The operators $O_{\bm{l},\bm{m}}$ are involutions that are orthogonal in the Hilbert-Schmidt inner product and therefore form a basis. Furthermore, they are closed under Clifford unitaries. We will delve more into the properties of the operators $O_{\bm{l},\bm{m}}$ in the next subsection.

Let us now define the distribution 
\bal
x_\rho (\bm{l},\bm{m}) \coloneqq d^{-n}\Tr(O_{\bm{l},\bm{m}} \rho )
\label{eq:distribution with our operator basis}
\eal
which corresponds to the coefficients for $O_{\bm{l},\bm{m}}$ when expanding the state $\rho$ with this operator basis.
Although $\bm{l},\bm{m}$ are elements of $\mathds{Z}_{2d}^n$ in general, the operators $O_{\bm{l},\bm{m}}$, and correspondingly $x_\rho(\bm{l},\bm{m})$, can only gain a phase factor by a translation $l_i\to l_i+d$ and $m_i\to m_i+d$ for any $i=1,\dots,n$, and that the operators $\{O_{\bm{l},\bm{m}}\}_{l,m\in\mathds{Z}_d^n}$ form an operator basis of a $n$-qudit system.
We consider the $l_p$-norm for this distribution over the restricted domain $\bm{l},\bm{m}\in\mathds{Z}_d^n$, i.e.,  
\bal
 \|x_\rho\|_p =\left(\sum_{\bm{l},\bm{m}\in\mathds{Z}_d^n} |x_\rho(\bm{l},\bm{m})|^p \right)^{1/p}.
\eal
Its properties as a magic measure depend on whether the dimension of the discrete-variable systems is even or odd.
Let us make this point explicit for the case of $p=1$. We first introduce the following proposition:

\begin{proposition}\label{prop:measure_properties}
The quantity $\|x_\rho\|_1$ is a magic measure for pure states, i.e. it satisfies the following properties, in any dimensions:
\begin{enumerate}
    \item Invariance under Clifford unitaries $U_C$:\\ $\norm{x_{U_C\rho U_C^\dagger}}_1 = \norm{x_{\rho}}_1  $ 
    \item Multiplicativity: $\norm{x_{\rho\otimes \sigma}}_1 = \norm{x_{\rho}}_1   \norm{x_{\sigma}}_1 $
    \item \label{item:free minimum} Pure stabilizer states achieve the minimum value:\\ $\|x_\phi\|_1= 1$ for every pure stabilizer state $\phi$, and $\|x_\psi\|_1\geq 1$ for every pure state $\psi$.
\end{enumerate}
\end{proposition}

The first two properties directly follow from the properties of $O_{\bm{l},\bm{m}}$, which we will address in detail in Sec.\ref{sec:Operatorbasis}. 
We show the third property in Appendix~\ref{app:free minimum}.

For odd dimensions, $\|x_\rho\|_1$ coincides with the discrete Wigner negativity~\cite{gross2006hudson,veitch2014resource}. 
As such, it can be defined for general mixed states, and it satisfies the properties discussed below Eq.(\ref{eq:negativity Wigner definition}), namely monotonicity under general Clifford protocols, also including Pauli measurements, along the following property:

\textbf{Proposition 1.(continued)}
\textit{
\begin{enumerate}
    \setcounter{enumi}{3}
    \item Faithfulness: For a pure state $\phi$,  $\|x_\phi\|_1=1$ if and only if $\phi$ is a stabilizer state.
\end{enumerate}
}

On the other hand, for the case of even dimensions, $\|x_\rho\|_1$, and more generally $\|x_\rho\|_p$, do not reduce to known magic measures in general. As a matter of fact, the definition of a discrete Wigner function in even dimensions is more challenging and involves expanding the set of phase-space point operators to an over-complete basis~\cite{raussendorf2020phase}.
In addition, the phase space point operators always have a unit trace~\cite{gross2006hudson, raussendorf2020phase}, while it is not the case for $O_{l,m}$ in even dimensions, indicating the subtlety of connecting it to Wigner functions.  
However, for the special case of $d=2$, we will show in Sec.\ref{sec:char} that $\|x_\rho\|_p$ is proportional to the stabilizer R\'enyi entropy for $\alpha = p/2$~\cite{leone2022stabilizer}, as the operators $O_{l,m}$ reduce to the Pauli operators. As a consequence, we recover faithfulness of $\|x_\psi\|_p$ for pure multi-qubit states in arbitrary $p$. In addition, it has recently been shown that the stabilizer R\'{e}nyi entropy with $\alpha\geq 2$ satisfies the monotonicity under stabilizer protocols~\cite{Leone2024stabilizer}, implying the same property for $\|x_\psi\|_p$ in qubit systems. Stabilizer protocols include Clifford unitaries, computational basis measurements, addition of computational basis states and feed-forward of the previous operations on measurement outcomes and classical randomness.

In summary, the major dividing line between even and odd dimensions for the above magic measures is the question of monotonicity under Pauli measurement and the inclusion of mixed states. Broadly speaking, it is an important problem to find computable magic measures for multi-qudit systems that are non-increasing under Pauli measurements. Beyond the stabilizer R\'{e}nyi entropy with $\alpha\geq 2$, one such measure is known as stabilizer nullity~\cite{beverland2020lower}, but it is a highly discontinuous measure unstable under an infinitesimally small perturbation. Finding such other computable measures with full monotonicity will make an interesting future direction.

Having introduced the main ingredients, we are ready to present the main result. We provide a fundamental and quantitative relation between magic and non-Gaussianity, both of which are central quantum resources in the major operational frameworks for quantum computing.
We do so by directly connecting the magic measure defined above to the amount of Wigner negativity of the continuous-variable state encoding the discrete-variable state.
More precisely, we connect the $l_p$-norm of the continuous Wigner function of a qudit encoded in GKP with the $l_p$ norm of the distribution defined in Eq.~\eqref{eq:distribution with our operator basis}.

\begin{theorem}\label{thm:CV-DV connection}
For an $n$-qudit state $\rho$ on a $d^n$-dimensional space and for an arbitrary real number $p>0$, it holds that
  \bal
   d^{n(1-1/p)}\|x_{\rho }\|_p = \frac{\|W^\CV_{\rho _\GKP}\|_{p,\cell}}{\|W^\CV_{\STAB_n, \GKP}\|_{p,\cell}},
   \label{eq:Wigner connection general}
  \eal
where 
\bal
\|W^\CV_{\STAB_n, \GKP}\|_{p,\cell}&\coloneqq \|W^\CV_{\phi_\GKP}\|_{p,\cell}\\
&=(4d)^{n/p}/(8\pi d)^{n/2}
\eal
is a quantity that takes the same value for every $n$-qudit pure stabilizer state $\phi$.
When $d$ is odd, we further have 
\bal
d^{n(1-1/p)}\|x_{\rho }\|_p=d^{n(1-1/p)}\|W^\DV_{\rho }\|_p = \frac{\|W^\CV_{\rho _\GKP}\|_{p,\cell}}{\|W^\CV_{\STAB_n, \GKP}\|_{p,\cell}}.
\label{eq:Wigner connection odd}
\eal

\end{theorem}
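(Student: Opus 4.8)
The plan is to compute the continuous Wigner function of the encoded state $\rho_\GKP=\sum_{j,k}\rho_{jk}\ket{j}_\GKP\bra{k}_\GKP$ explicitly, recognize it as a lattice of Dirac peaks whose weights reproduce the distribution $x_\rho$ of \eqref{eq:distribution with our operator basis}, and then match the two $l_p$-norms peak by peak. Starting from the single-basis-state Wigner function \eqref{eq:Wigner GKP} and using bilinearity in $\ket{j}_\GKP,\bra{k}_\GKP$, I would insert position eigenstates and carry out the defining Fourier integral to obtain, for each mode, a sum of $\delta$-functions in $r_q$ located at $\frac{\alpha}{2}(j+k+ds)$ carrying an oscillating $r_p$-factor. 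A Poisson resummation of the residual lattice sum then converts that factor into a $\delta$-comb at $r_p=\frac{\pi}{d\alpha}n$, producing the sign factors $(-1)^{\cdots}$ exactly as in \eqref{eq:Wigner GKP}. The upshot is that $W^\CV_{\rho_\GKP}$ is supported on $r_q\in\frac{\alpha}{2}\mathds{Z}$, $r_p\in\frac{\pi}{d\alpha}\mathds{Z}$, and that within one cell $[0,\sqrt{2d\pi})^2$ there are exactly $2d$ peaks per quadrature per mode.

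The heart of the argument is to show that the weight of the peak at $(r_q,r_p)=(\tfrac{\alpha}{2}\bm L,\tfrac{\pi}{d\alpha}\bm n)$ equals $(2\alpha)^{-n}x_\rho(\bm L,\bm n)$, where $\bm L,\bm n\in\mathds{Z}_{2d}^n$ and $x_\rho$ is read off from the $\mathds{Z}_{2d}$-periodic extension of the basis \eqref{eq:operator basis definition}. For a single mode I would collect all quadruples $(j,k,s,t)$ feeding a given cell peak: fixing $j+k\equiv L\pmod d$ pins the $r_q$-location and leaves one free lattice direction that the Poisson step turns into the $r_p$-peak, while $j+k+d(s+t)=L$ fixes the residual integer and hence the sign. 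Writing $M_l=\sum_u\ket{u}\bra{(l-u)\bmod d}$ and comparing term by term with $x_\rho(L,n)=d^{-1}\sum_u\omega_d^{nL/2-nu}\rho_{(L-u)\bmod d,\,u}$, one finds that the contributions with $j+k=L$ and $j+k=L\pm d$ recombine, with their phases and signs, into precisely $d\,x_\rho(L,n)$. This bookkeeping of phases, parities, and the $\mathds{Z}_{2d}$ redundancy $O_{l+d,m}=(-1)^mO_{l,m}$, $O_{l,m+d}=(-1)^lO_{l,m}$ is the step I expect to be the main obstacle; the $n$-mode case then follows because both the Wigner function and $O_{\bm l,\bm m}=\prod_iO_{l_i,m_i}$ factor through the position-basis matrix elements.

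Next I would evaluate the cell norm. Since $W^\CV_{\rho_\GKP}$ is a weighted sum of $\delta$-peaks at distinct points, the prescription of Appendix~\ref{ap:p-norm} reduces $\|W^\CV_{\rho_\GKP}\|_{p,\cell}^p$ to the sum of $p$-th powers of the peak weights inside the cell. Using that $|x_\rho|$ is invariant under $L_i\to L_i+d$ and $n_i\to n_i+d$, the $(2d)^{2n}$ cell peaks collapse to the $d^{2n}$ values summed in $\|x_\rho\|_p$ with a uniform multiplicity $4^n$, giving $\|W^\CV_{\rho_\GKP}\|_{p,\cell}=\tfrac{4^{n/p}}{(2\alpha)^n}\|x_\rho\|_p$. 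For the normalization I would apply this to $\phi=\dyad{0}^{\otimes n}$, whose distribution is flat, $x_\phi(\bm l,\bm m)=d^{-n}\delta_{\bm l,\bm 0}$, so $\|x_\phi\|_p=d^{n(1/p-1)}$ and, after substituting $\alpha=\sqrt{2\pi/d}$, $\|W^\CV_{\STAB_n,\GKP}\|_{p,\cell}=(4d)^{n/p}/(8\pi d)^{n/2}$. Independence of this value over all pure stabilizer states follows because logical Cliffords act as Gaussian (symplectic) maps that permute the peak lattice within a cell without changing peak magnitudes, and every pure stabilizer state is a logical Clifford image of $\dyad{0}^{\otimes n}$. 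Forming the ratio cancels the common factor $4^{n/p}/(2\alpha)^n$ and yields \eqref{eq:Wigner connection general}.

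Finally, for odd $d$ I would establish $\|x_\rho\|_p=\|W^\DV_\rho\|_p$ by identifying the operator basis with the phase-space point operators of \eqref{eq:phase space point operator definition}. A direct computation gives $A(\bm 0)=\sum_j\ket{-j}\bra{j}=M_0$ and, displacing by $P_d$, $A(u_1,u_2)=\sum_u\omega_d^{-2u_1u_2+2u_2u}\ket{u}\bra{2u_1-u}$, which coincides with $O_{l,m}=\sum_u\omega_d^{ml/2-mu}\ket{u}\bra{l-u}$ under $l=2u_1$, $m=-2u_2$. Since $2$ is invertible modulo odd $d$, the map $(l,m)\mapsto(2^{-1}l,-2^{-1}m)$ is a bijection of $\mathds{Z}_d^2$, so $x_\rho$ is merely a relabeling of $W^\DV_\rho$ and the two $l_p$-norms agree; tensoring over modes gives the odd-$d$ statement \eqref{eq:Wigner connection odd}. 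The only genuinely delicate points are the phase/parity bookkeeping of the second paragraph and the $\delta$-power integration convention, both of which become controlled once the cell lattice and the $\mathds{Z}_{2d}$ redundancy are made explicit.
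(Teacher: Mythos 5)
Your proposal is correct and follows essentially the same route as the paper: derive the atomic (delta-comb) form of $W^\CV_{\rho_\GKP}$ and identify the peak weights with $x_\rho$ (the paper's Proposition~\ref{prop:coefficients relation}), use the flatness of $x_\phi$ for stabilizer states via Clifford/Gaussian covariance (Proposition~\ref{pro:pure stabilizer state}) to fix the normalization, collapse the $(2d)^{2n}$ cell peaks to $4^n$ copies of the $\mathds{Z}_d^{2n}$ sum, and for odd $d$ identify $O_{2a_1,-2a_2}=A(a_1,a_2)$ using invertibility of $2$. The phase bookkeeping you flag as delicate is exactly what the paper carries out in Appendices~\ref{ap:atomic} and~\ref{ap:basis}, and your stated intermediate formulas (peak prefactor $(2\alpha)^{-n}=(d/8\pi)^{n/2}$, $\|x_\phi\|_p=d^{n(1/p-1)}$) check out.
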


We prove Theorem~\ref{thm:CV-DV connection} later in this section using the following general relation between the continuous-variable Wigner function of the GKP state encoding a sate and the corresponding discrete Wigner function of the encoded state, which may be of interest on its own.

The peculiar property of the Wigner function of GKP states is that it comes with an atomic form, where the Dirac distribution has disjoint support
\begin{equation}\begin{aligned}
    &W^\CV_{\rho _\GKP}(\bm{r})\\
    &=\frac{\sqrt{d}^n}{\sqrt{8\pi}^n}\sum_{\bm{l},\bm{m}} c_{\rho_\GKP}(\bm{l},\bm{m})\delta\qty(\bm{r_p}-\bm{m}\sqrt{\frac{\pi}{2d}})\delta\qty(\bm{r_q}-\bm{l}\sqrt{\frac{\pi}{2d}})
    \label{eq:atomic}
\end{aligned}\end{equation}
where $c_{\rho_\GKP}(\bm{l},\bm{m})$ is a coefficient serving as a weight for each peak in the Wigner function of a GKP state $\rho_\GKP$.
We show how to derive Eq.~(\ref{eq:atomic}) from~(\ref{eq:Wigner GKP}) in Appendix~\ref{ap:atomic}.
This Wigner function forms a lattice, so we restrict it to one unit cell and focus on $l_i,m_i\in [0,2d-1]$ or equivalently $l_i,m_i \in \mathds{Z}_{2d}$ for each $i=1,\dots,n$.

The following result shows that the weight $c_{\rho_\GKP}(\bm{l},\bm{m})$ in the domain $\bm{l},\bm{m}\in\mathds{Z}_{2d}^n$ exactly coincides with the distribution defined in Eq.~\eqref{eq:distribution with our operator basis}.

\begin{proposition}\label{prop:coefficients relation}
For $\bm{l},\bm{m}\in\mathds{Z}_{2d}^n$, it holds that
\begin{align}
   c_{\rho_\GKP}(\bm{l},\bm{m}) =  x_\rho(\bm{l},\bm{m}).
\end{align}
\end{proposition}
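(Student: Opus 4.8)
The plan is to compute the peak weight $c_{\rho_\GKP}(\bm l,\bm m)$ directly from the definition of the continuous Wigner function and to match it, peak by peak, with $x_\rho(\bm l,\bm m)=d^{-n}\Tr(O_{\bm l,\bm m}\rho)$. Both functionals are linear in $\rho$, and both the GKP encoding and the operator basis $O_{\bm l,\bm m}=\prod_{i=1}^n O_{l_i,m_i}$ factorize over modes. Hence it suffices to prove the identity for a single mode on the matrix units $\rho=\dyad{j}{k}$, and then recover the general $n$-qudit statement by tensoring the single-mode result and extending by linearity (the product matrix units span all operators). I therefore restrict to one mode.

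For the continuous-variable side I would expand $\rho_\GKP=\sum_{j,k}\rho_{jk}\ket{j}_\GKP\bra{k}_\GKP$ in the position representation, where each $\ket{j}_\GKP$ is the Dirac comb $\sum_s\ket{Q=\alpha(j+ds)}$ with $\alpha=\sqrt{2\pi/d}$. Inserting this into the Wigner formula underlying Eq.~\eqref{eq:Wigner GKP} and performing the $x$-integral against the two combs fixes the position argument to $r_q=\tfrac{\alpha}{2}[(j+k)+d(s+t)]=l\sqrt{\pi/(2d)}$, producing the position deltas with peak index $l=(j+k)+d(s+t)$. The remaining momentum dependence is a sum of plane waves $e^{i r_p\alpha[(j-k)+d(s-t)]}$ over the lattice $(s,t)$; changing variables to $a=s+t$ and $b=s-t$ (subject to the parity constraint $a\equiv b\bmod 2$) and applying Poisson summation to the free index converts this into a Dirac comb $\sum_m\delta(r_p-m\sqrt{\pi/(2d)})$. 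Evaluating the accompanying phase on the support $r_p\alpha=\pi m/d$ gives $e^{i\pi m(j-k)/d}(-1)^{ma}$. Collecting the overall constants reproduces exactly the prefactor $\sqrt d/\sqrt{8\pi}$ of Eq.~\eqref{eq:atomic}, and reading off the coefficient of the peak at $(l,m)$ yields
\begin{equation}
c_{\rho_\GKP}(l,m)=\frac{1}{d}\sum_{\substack{j,k\,:\,j+k\equiv l\ (d)}}\rho_{jk}\,e^{i\pi m(j-k)/d}(-1)^{m(l-j-k)/d},
\end{equation}
where the constraint $j+k\equiv l\pmod d$ arises because the peak index $l$ fixes the integer $a=(l-j-k)/d$.

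Separately I would evaluate the discrete side from the explicit action of the operator basis \eqref{eq:operator basis definition}: since $M_l=\sum_{u+v\equiv l\ (d)}\dyad{u}{v}$ and $Z_d^m=\sum_w\omega_d^{mw}\dyad{w}$, one gets $O_{l,m}=e^{-i\pi ml/d}\sum_{u+v\equiv l\ (d)}e^{2\pi i m v/d}\dyad{u}{v}$, so that $x_{\dyad{j}{k}}(l,m)=d^{-1}e^{-i\pi ml/d}e^{2\pi i m j/d}\,\mathbb{1}[\,j+k\equiv l\bmod d\,]$. The final step is to check agreement: substituting $j\equiv l-k\pmod d$ into $e^{2\pi i mj/d}$ and using $(-1)^{m(l-j-k)/d}=e^{i\pi m(l-j-k)/d}$ in the Wigner coefficient, both expressions collapse to $d^{-1}e^{i\pi ml/d}e^{-2\pi i mk/d}\,\mathbb{1}[\,j+k\equiv l\bmod d\,]$, establishing $c_{\rho_\GKP}(l,m)=x_\rho(l,m)$ on the unit cell $l,m\in\mathds{Z}_{2d}$.

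I expect the main obstacle to be the phase bookkeeping in concert with the constrained Poisson summation. The subtle points are the parity constraint $a\equiv b\bmod 2$ coupling the position and momentum lattice indices and generating the $(-1)^{ma}$ factor; the normalization emerging from Poisson summation, which must reproduce the $\sqrt d/\sqrt{8\pi}$ prefactor and hence hides the $d^{-1}$ relating the bare plane-wave sum to $c_{\rho_\GKP}$; and the half-integer phase convention $e^{-i\pi ml/d}$ built into $O_{l,m}$, which is precisely what makes the discrete and continuous phases match once the support constraint $j+k\equiv l\pmod d$ is imposed. The atomic form itself may be imported from Appendix~\ref{ap:atomic}, so that the genuine content of the proposition is the identification of its weights with $x_\rho$.
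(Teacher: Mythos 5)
Your proposal is correct and follows essentially the same route as the paper's Appendix~\ref{ap:atomic}: expand $\rho$ in matrix units $\dyad{j}{k}$, perform the $x$-integral against the two position combs, Poisson-resum to obtain the momentum comb, read off the peak weight at $(l,m)$, and match it with $d^{-1}\Tr(O_{l,m}\,\cdot\,)$ using the constraint $j+k\equiv l \pmod d$. The only cosmetic difference is your $(a,b)=(s{+}t,s{-}t)$ change of variables with the parity constraint, where the paper instead shifts the summation index before applying Poisson resummation; the resulting phase $e^{-i\pi m l/d}\omega_d^{mj}$ and coefficient formula are identical.
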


The proof of Proposition~\ref{prop:coefficients relation} can be found in Appendix~\ref{ap:atomic}. 
This establishes the fundamental relation between discrete-variable and continuous-variable representations of an arbitrary state $\rho$.
As we will see later in this section, for odd dimensions Proposition~\ref{prop:coefficients relation} directly connects the discrete 
Wigner function of an arbitrary state $\rho$ and the continuous Wigner function of the GKP state that encodes $\rho$.

Theorem~\ref{thm:CV-DV connection} relates the $l_p$-norm of GKP Wigner functions to the $l_p$-norm of $x_\rho $ for a discrete-variable state $\rho $, which quantifies the magicness in $\rho $, and therefore establishes a direct connection 
between the Wigner negativity of a qudit encoded in GKP and a finite-dimensional magic measure.
The case of $p=1$ is particularly insightful. 
In this case, the 1-norm of the continuous-variable Wigner function coincides with the continuous-variable Wigner negativity, which is known to be a valid measure of non-Gaussianity~\cite{takagi2018convex, albarelli2018resource}. 
The quantity in Theorem~\ref{thm:CV-DV connection} is then the amount of non-Gaussianity renormalized by the negativity of the GKP states encoding stabilizer states.
This renormalization is necessary, as even stabilizer states encoded in GKP have non-zero continuous Wigner negativity.
In light of Theorem~\ref{thm:CV-DV connection}, the properties in Proposition \ref{prop:measure_properties} can be equivalently proven by leveraging on the properties of continuous Wigner function.

\subsubsection{Properties of operator basis}
\label{sec:Operatorbasis}

Since the operators $O_{\bm{l},\bm{m}}$ defined in Eq.~\eqref{eq:operator basis definition} are of interest in their own way and play a central role in connecting the continuous-variable and discrete-variable worlds as can be seen in Theorem~\ref{thm:CV-DV connection} and Proposition~\ref{prop:coefficients relation}, let us investigate their properties. 
A proof of the properties outlined here can be found in Appendix~\ref{ap:basis}.

Let us consider a single-qudit case, as the extension to multi-qudit case is straightforward. 
One can see that for $l,m\in\mathds{Z}_d$, the operator $O_{l,m}$ can also be written as
\bal
 O_{l,m} &= e^{-i\pi ml/d} R X_d^{-l} Z_d^m
 \label{eq:operator basis with reflection}
\eal
where $R\ket{j} = \ket{-j}$ for $j\in\mathds{Z}_d$ is the reflection operator for a computational basis state $\ket{j}$ with $\ket{-j}\coloneqq \ket{d-j}$. 
The form in \eqref{eq:operator basis with reflection} implies that our operator basis $\{O_{l,m}\}_{l,m}$ is equivalent to the phase space point operator introduced in Ref.~\cite{Miquel2002quantum} up to a constant factor.  
In the case of odd $d$, $X_d^{-l/2}$ and $Z_d^{m/2}$ are well defined (as $\mathds{Z}_d$ contains multiplicative inverse of 2 and thus $-l/2, m/2\in\mathds{Z}_d$), and the relation between $O_{l,m}$ and the phase space point operator \eqref{eq:phase space point operator definition} becomes apparent as 
\bal
O_{l,m} &= e^{-i\pi ml/d} \omega_d^{\frac{lm}{2}}  X_d^{l/2} Z_d^{-m/2} R Z_d^{m/2} X_d^{-l/2} \\
& = (-1)^{ml} P_d\left(\frac{l}{2},-\frac{m}{2}\right) R P_d\left(\frac{l}{2},-\frac{m}{2}\right)^\dagger\\
& = (-1)^{ml} A\left(\frac{l}{2},-\frac{m}{2}\right)
\eal
where we used $X_d^a R = R X_d^{-a}$ and $Z_d^b R = R Z_d^{-b}$ for every $a,b\in\mathds{Z}_d$. 
We discuss the relation between $O_{l,m}$ and the phase space point operator in detail in Appendix~\ref{ap:basis}.

In general, $l,m\in \mathds{Z}_{2d}$ are defined over $\mod 2d$.
However, for most applications, one can restrict to $\mathbb{Z}_d$.
For a value above $d$, the operators are periodic in $d$
\begin{align}
    M_l&=M_{l+d}\\
    Z_d^m &= Z_d^{m+d}
\end{align}
but can have different phases
\bal
   O_{l+d,m}&= (-1)^{m} O_{l,m}\\
   O_{l,m+d}&= (-1)^{l} O_{l,m}\\
    O_{l+d,m+d}&= (-1)^{l+m+d} O_{l,m}.
    \label{eq:phase shift rule}
\eal
Therefore, if one is only interested in the operators independent of the sign, one can restrict the domain of $l,m$.

In general, $O_{l,m}$ and $Z_d$ are unitary, and $M_l$ is Hermitian. Thus, it holds that
\begin{align}
    O_{l,m} O_{l,m}^{\dagger} =\mathds{1}.
\end{align}
The operator $O_{l,m}$ is also Hermitian $O=O^{\dagger}$
and thus 
\begin{align}
    O_{l,m}O_{l,m}=\mathds{1},
\end{align}
implying that the spectrum is $\pm 1$.

These operators are orthogonal in the sense of the Hilbert-Schmidt inner product
\begin{align}
     \Tr \qty(O_{l,m} O_{l'm'}) = \delta_{mm'}\delta_{ll'}d.
\end{align}
Furthermore, the action of Clifford unitaries on the operators $O_{\bm{l},\bm{m}}$ is equivalent to a symplectic linear transformation on the coordinates $(\bm{l},\bm{m})$ and constant shifts. We show this in Appendix~\ref{ap:cliffordcov}.
For $d=2$, one recovers the standard Pauli operators.
Therefore, $O_{l,m}$ can be seen as a Hermitian generalization of the Pauli operators to arbitrary dimensions.

$M_l$ contains $d$  $1$s for any dimension, but they behave differently for even and odd dimensions. 
Whether the operator is traceless for even dimensions depends on whether $l$ is even or odd---$M_l$ is traceless for odd $l$, while it has trace 2 for even $l$.
For odd dimensions, the matrices $M_l$ have trace $1$.

Using the properties of $M_l$ and the known properties of $Z_d$, we can now give a summary of the properties of $O_{l,m}$
\begin{align}
    \Tr[O_{l,m}] &=\begin{cases}
    1+(-1)^m & d,l\text{ even}\\
    0 & d \text{ even}, l \text{ odd}\\
    (-1)^{ml} & d \text{ odd}
    \end{cases}\\
    O_{l,m}&=O_{l,m}^{\dagger}\\
    O_{l,m}^2&=\mathds{1}\\
      \Tr \qty(O_{l,m} O_{l'm'}) &= \delta_{mm'}\delta_{ll'}d.
\end{align}
The properties above extend straightforwardly to the case of many qudits.

As we have seen that the operators are orthogonal under the Hilbert-Schmidt norm and form a basis, we can expand operators in that basis. We 
restrict to $\bm{l},\bm{m}\in \mathds{Z}_d^n$, since the operators for other $\bm{l},\bm{m}$ are the same modulo a potentially different sign that can be absorbed in the coefficients.

We can represent every multi-qudits quantum state in basis the operators $O_{\bm{l},\bm{m}}$ such that
\bal
    \rho &=\sum_{\bm{m},\bm{l}} \Tr\qty[\rho \frac{O_{\bm{l},\bm{m}}}{d^n} ]O_{\bm{l},\bm{m}}\\
    &=\sum_{\bm{l},\bm{m}} x_{\rho }(\bm{l},\bm{m})O_{\bm{l},\bm{m}}.
\eal
Given by the spectrum of $O_{\bm{l},\bm{m}}$, we can bound the value of the coefficients
\begin{align}
        -\frac{1}{d^n}  &\leq x_{\rho }(\bm{l},\bm{m})\leq \frac{1}{d^n}.
\end{align}
We can further bound $x_{\rho }(\bm{l},\bm{m}) $ using $\Tr(\rho)=1$ as 
\bal
   \Tr\qty( \rho  )&=\sum_{\bm{l},\bm{m}}x_{\rho }(\bm{l},\bm{m}) \Tr\qty(O_{\bm{l},\bm{m}})\\
   &=\begin{cases}
        \sum_{\bm{l},\bm{m}} (-1)^{\bm{l}\cdot\bm{m}}  x_{\rho }(\bm{l},\bm{m})=1&\text{odd}\\
          \sum_{\bm{l},\bm{m}:\text{even}} 2^n x_{\rho }(\bm{l},\bm{m})=1&\text{even}
    \end{cases}
\eal
where $\bm{l}\cdot\bm{m}=\sum_{i=1}^nl_{i}m_{i}$.

Interestingly, we find the following characterization of pure stabilizer states, which we prove in Appendix~\ref{ap:stab}.
\begin{proposition}\label{pro:pure stabilizer state}
 For an arbitrary pure stabilizer state $\phi$, $x_{\phi }(\bm{l},\bm{m})$ has the same magnitude $\abs{x_{\phi }(\bm{l},\bm{m})}=\frac{1}{d^n}$ with $d^n$ non-zero coefficients over $\bm{l},\bm{m}\in\mathds{Z}_d^n$.
\end{proposition}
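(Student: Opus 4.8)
The plan is to reduce the general statement to a single easily computed representative by exploiting two facts. First, by the Clifford covariance of the operator basis established in Appendix~\ref{ap:cliffordcov}, conjugation by any Clifford unitary $U_C$ maps each $O_{\bm{l},\bm{m}}$ to $\pm O_{(\bm{l}',\bm{m}')}$, where $(\bm{l}',\bm{m}')$ is an affine symplectic image of $(\bm{l},\bm{m})$ and the sign is fixed by the relations~\eqref{eq:phase shift rule}. Second, the Clifford group acts transitively on pure stabilizer states, so every pure stabilizer state can be written as $\phi = U_C \dyad{0}^{\otimes n} U_C^\dagger$ for some Clifford $U_C$. These two inputs together say that it is enough to establish the proposition for the computational basis state $\rho_0=\dyad{0}^{\otimes n}$ and then transport the conclusion along the Clifford orbit.

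Combining the covariance with the definition~\eqref{eq:distribution with our operator basis} gives
\bal
 x_{U_C\rho U_C^\dagger}(\bm{l},\bm{m}) = d^{-n}\Tr\qty(U_C^\dagger O_{\bm{l},\bm{m}} U_C\,\rho) = \pm\, x_\rho(\bm{l}',\bm{m}'),
\eal
so the distribution of a Clifford-transformed state is a \emph{signed permutation} of the original distribution over the fundamental domain $\bm{l},\bm{m}\in\mathds{Z}_d^n$. In particular, both the number of non-zero coefficients and the multiset of their magnitudes are Clifford-invariant, which is precisely what reduces the problem to $\rho_0$. For a single qudit I would then compute $x_{\dyad{0}}(l,m)=d^{-1}\bra{0}O_{l,m}\ket{0}$ directly from~\eqref{eq:operator basis definition}: using $Z_d^m\ket{0}=\ket{0}$ and $M_l\ket{0}=\ket{l\bmod d}$ one finds $\bra{0}O_{l,m}\ket{0}=e^{-i\pi ml/d}\braket{0}{l\bmod d}=\delta_{l,0}$ for $l,m\in\mathds{Z}_d$, hence $x_{\dyad{0}}(l,m)=d^{-1}\delta_{l,0}$. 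Taking the tensor product yields $x_{\rho_0}(\bm{l},\bm{m})=d^{-n}\delta_{\bm{l},\bm{0}}$, which is supported on exactly the $d^n$ points $\{(\bm{0},\bm{m}):\bm{m}\in\mathds{Z}_d^n\}$, each of magnitude $d^{-n}$. This settles the claim for $\rho_0$, and the Clifford-invariance of the support size and magnitudes established above then extends it to every pure stabilizer state.

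The step requiring the most care is the first one: one must check that the affine symplectic map $(\bm{l},\bm{m})\mapsto(\bm{l}',\bm{m}')$ together with the sign bookkeeping of~\eqref{eq:phase shift rule} genuinely induces a \emph{bijection} of the fundamental domain $\mathds{Z}_d^n\times\mathds{Z}_d^n$ onto itself, so that it permutes the coefficients (up to sign) rather than collapsing or omitting some of them. Concretely, this amounts to verifying invertibility of the symplectic part modulo $d$ and confirming that coordinates falling outside $[0,d)$ are folded back with only a sign change—which does not affect $\abs{x_\phi(\bm{l},\bm{m})}$. Once this bijectivity is in hand, the invariance of both the cardinality of the support and the common magnitude follows immediately, and everything else is the routine single-qudit computation above.
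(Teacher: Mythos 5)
Your proposal is correct and follows essentially the same route as the paper: the paper likewise computes the flat decomposition $\dyad{0}=\frac{1}{d}\sum_{j}O_{0,j}$ for the computational zero state and then invokes the Clifford covariance of the $O_{\bm{l},\bm{m}}$ basis (Appendix~\ref{ap:cliffordcov}) together with the Clifford orbit of stabilizer states to conclude flatness in general. Your explicit remark that the affine symplectic action plus the sign bookkeeping of Eq.~\eqref{eq:phase shift rule} must induce a signed \emph{permutation} of the fundamental domain is a point the paper's proof leaves implicit, and it is a worthwhile addition rather than a deviation.
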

We note that, because of the property in Eq.~\eqref{eq:phase shift rule}, the non-zero coefficients in the larger domain $\bm{m},\bm{l}\in\mathds{Z}_{2d}^n$ for a pure stabilizer state still solely takes the value $d^{-n}$, and the number of non-zero coefficients increases to $(4d)^n$.

We see that $x_\rho(\bm{l},\bm{m})$ is not directly a quasi-probability distribution but can easily be modified to be one for odd dimensions.
In the case of odd dimensions, we can show a direct connection between the operators $O_{l,m} $ and the phase space operators $A(a_1,a_2)$.
The precise connection is
\begin{align}
 O_{2a_1,-2a_2}=A(a_1,a_2),
\end{align}
where now the index $2a_1, -2a_2 \in \mathbb{Z}_{2d}$ go over numbers $\mod 2d$, to get the sign correct.
We can make the connection even more explicit by correcting the trace of $O_{l,m}$ and remembering that otherwise the phase space operators and  $O_{l,m}$ are reordered versions of each other
\begin{align}
    (-1)^{a_1a_2}O_{a_1,a_2}= A(\sigma[a_1,a_2]^T)
    \label{eq:phase point permutation}
\end{align}
where $\sigma$ is some permutation matrix over $\mathds{Z}_{d}^2$.
This result directly connects the discrete Wigner function $W_\rho^\DV$ with the distribution $x_\rho$ via
\begin{align}
    x_\rho(\bm{a_1},\bm{a_2})=(-1)^{\bm{a_1}\cdot \bm{a_2}} W_\rho^\DV(\sigma [\bm{a_1},\bm{a_2}]^T).
\end{align}

\subsubsection{Proof of Theorem~\ref{thm:CV-DV connection}}

We are now ready to show Theorem~\ref{thm:CV-DV connection}.
\begin{proof}[Proof of Theorem~\ref{thm:CV-DV connection}]
Using Propositions~\ref{prop:coefficients relation} and \ref{pro:pure stabilizer state}, we get for an arbitrary pure stabilizer state $\phi$ that 
\bal
\|W^\CV_{\phi_\GKP}\|_{p,\cell}&= \left[(4d)^n \left\{\left(\frac{d}{8\pi}\right)^{n/2} d^{-n}\right\}^p\right]^{1/p}\\
&= \frac{(4d)^{n/p}}{(8\pi d)^{n/2}}.
\eal
Proposition~\ref{prop:coefficients relation} gives 
\bal
 \|W_{\rho_\GKP}^\CV\|_{p,\cell} &= \left[\sum_{\bm{l},\bm{m}\in\mathds{Z}_{2d}^n} \left\{\left(\frac{d}{8\pi}\right)^{n/2} |x_\rho(\bm{l},\bm{m})|\right\}^p\right]^{1/p}\\
 &= \left[4^n\sum_{\bm{l},\bm{m}\in\mathds{Z}_{d}^n} \left\{\left(\frac{d}{8\pi}\right)^{n/2} |x_\rho(\bm{l},\bm{m})|\right\}^p\right]^{1/p}\\
  &= 4^{n/p}\left(\frac{d}{8\pi}\right)^{n/2}\|x_\rho\|_p\\
  & = d^{n(1-1/p)}\|W^\CV_{\phi_\GKP}\|_{p,\cell} \|x_\rho\|_p,
\eal 
which shows Eq.~\eqref{eq:Wigner connection general}.

For odd-dimensional cases, we can employ Eq.~\eqref{eq:phase point permutation} to get 
\bal
    \rho  &= \sum_{\bm{a}_1,\bm{a}_2} W_{\rho }^{\DV}(\bm{a}_1,\bm{a}_2) A(\bm{a}_1,\bm{a}_2) \\
    &= \sum_{\bm{a}_1,\bm{a}_2} x_{\rho }(\bm{a}_1,\bm{a}_2)O_{\bm{a}_1,\bm{a}_2}\\
    &=\sum_{\bm{a}_1,\bm{a}_2} x_{\rho }(\bm{a}_1,\bm{a}_2) (-1)^{\bm{a}_1\cdot\bm{a}_2} A(\sigma[\bm{a}_1,\bm{a}_2]^T).
\eal
This gives in particular
\bal
    \norm{W_\rho^\DV}_p&=\left(\sum_{\bm{a}_1,\bm{a}_2} \abs{W_{\rho }^{\DV}(\bm{a}_1,\bm{a}_2)}^p\right)^{1/p}\\
    &=\left(\sum_{\bm{l},\bm{m}}\abs{(-1)^{\bm{l}\cdot\bm{m}}x_{\rho }(\bm{l},\bm{m})}^p\right)^{1/p}\\
    &=\|x_\rho\|_p.
\eal
This, together with Eq.~\eqref{eq:Wigner connection general}, shows Eq.~\eqref{eq:Wigner connection odd}, completing the proof.
\end{proof}

This concludes the section on establishing a connection between magic and non-Gaussianity with Wigner functions.

\subsection{Via characteristic function}
\label{sec:char}
In this section, we use the formalism of characteristic functions to establish a connection between magic and non-Gaussianity similar to the one found in the previous section.

The characteristic function of a 
qudit state $\rho = \sum_{\bm{u},\bm{v} \in \mathds{Z}_d^n} \rho _{\bm{u},\bm{v}}\ket{\bm{u}}\bra{\bm{v}}$ encoded in GKP can then be written as
\begin{equation}\begin{aligned}
    &\chi_{\rho _\GKP}^{\CV}(\bm{r})\\
    &=\sqrt{\frac{2\pi}{d}}\sum_{l,m=-\infty}^\infty \gamma_{\rho _\GKP}(l,m) \delta\qty(p-m\sqrt{\frac{2\pi}{d}})  \delta\qty(q-l\sqrt{\frac{2\pi}{d}}),
\end{aligned}\end{equation}
where $\gamma_{\rho _\GKP}(l,m)$ is a coefficient serving as a weight for
each distribution of the characteristic function of a GKP state $\rho _\GKP$.
We show the derivation in Appendix~\ref{ap:char_fct}.

The following result establishes the fundamental connection between the discrete characteristic function and the continuous-variable characteristic function of GKP states that encodes the discrete-variable state.

\begin{theorem}\label{thm:characteristic}
Let $\rho$ be an $n$-qudit state on a $d^n$-dimensional space.
For $\bm{l},\bm{m}\in\mathds{Z}_{2d}^n$, it holds that
\bal
  \gamma_{\rho _\GKP}(\bm{l},\bm{m}) = d^ne^{-i\pi \bm{l}\cdot \bm{m}/d} \omega_d^{ -\bm{l}\cdot \bm{m}/2}\chi^\DV_\rho(\bm{l},\bm{m})^*.
  \label{eq:characteristic function coefficients}
\eal

In particular, 
  \bal
  d^{n(1-1/p)}\|\chi^\DV_\rho \|_p = \frac{\|\chi^\CV_{\rho _\GKP}\|_{p,\cell}}{\|\chi^\CV_{\STAB,\GKP}\|_{p,\cell}}
  \eal
where 
\bal
\|\chi^\CV_{\STAB,\GKP}\|_{p,\cell} &\coloneqq \|\chi^\CV_{\phi_\GKP}\|_{p,\cell}\\
& = \left(\frac{2\pi}{d}\right)^{n/2}(4d)^{n/p}
\eal
is a quantity that takes the same value for every pure stabilizer state $\phi$.
\end{theorem}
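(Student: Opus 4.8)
The plan is to split the statement into two parts: first the coefficient identity in Eq.~\eqref{eq:characteristic function coefficients}, and then the $l_p$-norm identity, which will follow from it almost immediately. The coefficient identity is the characteristic-function counterpart of Proposition~\ref{prop:coefficients relation}, so I would prove it by the same kind of direct evaluation, namely by computing $\chi^\CV_{\rho_\GKP}(\bm r)=\Tr[\rho_\GKP D(-\bm r)]$ and reading off the weight of each Dirac peak. An alternative route is to note that $\chi$ and $W$ are symplectic Fourier transforms of each other both discretely and continuously, and to transport Proposition~\ref{prop:coefficients relation} through this duality; but since the atomic structure of $\chi^\CV_{\rho_\GKP}$ is already available (Appendix~\ref{ap:char_fct}), direct computation is cleaner and keeps better control of the phases.

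For the coefficient identity I would proceed as follows. Writing $\rho=\sum_{\bm u,\bm v}\rho_{\bm u,\bm v}\ket{\bm u}\bra{\bm v}$ and hence $\rho_\GKP=\sum_{\bm u,\bm v}\rho_{\bm u,\bm v}\ket{\bm u}_\GKP\bra{\bm v}_\GKP$ in the position-eigenstate representation of the code, I would insert this into $\Tr[\rho_\GKP D(-\bm r)]$ and use the elementary action of the displacement operator on position eigenstates. A Poisson resummation over the GKP lattice then collapses the result into the lattice of Dirac peaks stated just before the theorem, and the coefficient of the peak at $(\bm l,\bm m)$ is precisely $\gamma_{\rho_\GKP}(\bm l,\bm m)$. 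The central structural input is that the displacement by the lattice vector indexed by $(\bm l,\bm m)$, when its matrix elements are taken between code states, reproduces the discrete Heisenberg-Weyl operator $P_d(\bm l,\bm m)$ up to an overall phase --- the standard fact that GKP logical Paulis are realized by lattice displacements. Collecting the normal-ordering phase of $D$, the $\omega_d^{ab/2}$ convention in $P_d(a,b)$, and the reflection coming from $D(-\bm r)=D(\bm r)^\dagger$, I expect to arrive at $\gamma_{\rho_\GKP}(\bm l,\bm m)=d^n e^{-i\pi\bm l\cdot\bm m/d}\omega_d^{-\bm l\cdot\bm m/2}\chi^\DV_\rho(\bm l,\bm m)^*$, with the complex conjugate being the manifestation of that reflection and the factor $d^n$ absorbing the normalization difference between $\chi^\DV_\rho=d^{-n}\Tr[\rho P_d^\dagger]$ and the bare peak weight.

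The norm identity then follows by taking moduli. Since $e^{-i\pi\bm l\cdot\bm m/d}$, $\omega_d^{-\bm l\cdot\bm m/2}$, and complex conjugation all have unit modulus, the coefficient identity gives $|\gamma_{\rho_\GKP}(\bm l,\bm m)|=d^n|\chi^\DV_\rho(\bm l,\bm m)|$. I would then evaluate $\|\chi^\CV_{\rho_\GKP}\|_{p,\cell}$ using the delta-regularized $l_p$-norm of Appendix~\ref{ap:p-norm}, for which each peak contributes its weight times the atomic prefactor $(2\pi/d)^{n/2}$; the periodicity of $\gamma$ over $\mathds{Z}_{2d}^n$ lets me reduce the sum to the fundamental domain $\bm l,\bm m\in\mathds{Z}_d^n$ at the cost of a factor $4^n$. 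For a pure stabilizer state, $\chi^\DV_\phi$ is supported on $d^n$ points of modulus $d^{-n}$, so the nonzero weights number $(4d)^n$ and each has unit modulus after the $d^n$ factor, giving $\|\chi^\CV_{\STAB,\GKP}\|_{p,\cell}=(2\pi/d)^{n/2}(4d)^{n/p}$. Dividing and rearranging the remaining powers of $d$ and $4$ reproduces $d^{n(1-1/p)}\|\chi^\DV_\rho\|_p$, which is the claimed identity.

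The delicate part, and where I expect the real work to sit, is the phase bookkeeping in the coefficient identity. Matching the precise combination $e^{-i\pi\bm l\cdot\bm m/d}\omega_d^{-\bm l\cdot\bm m/2}$ together with the complex conjugate requires consistent conventions across three places at once: the $e^{-ir_{p}r_{q}/2}$ normal ordering in the displacement operator, the $\omega_d^{ab/2}$ factor in $P_d(a,b)$, and the Zak/Poisson resummation of the GKP lattice. The even-dimensional case is the most sensitive, since $\omega_d^{1/2}=e^{i\pi/d}$ is a $2d$-th root of unity and the phase is only well defined once $\bm l,\bm m$ are read modulo $2d$; this is exactly why the statement is phrased over $\mathds{Z}_{2d}^n$ rather than $\mathds{Z}_d^n$, and it is the step I would verify most carefully.
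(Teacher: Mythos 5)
Your proposal is correct and follows essentially the same route as the paper: a direct evaluation of $\Tr[\rho_\GKP D(-\bm r)]$ in the position-eigenstate representation of the code, Poisson resummation to obtain the atomic form, identification of the peak weights with $e^{-i\pi\bm l\cdot\bm m/d}\Tr[\rho X_d^{\bm l}Z_d^{\bm m}]$, and then the norm identity by taking moduli, reducing $\mathds{Z}_{2d}^n$ to $\mathds{Z}_d^n$ with the factor $4^n$, and using that $|\chi_\phi^\DV|=d^{-n}$ on $(4d)^n$ points for a stabilizer state. Your numerical bookkeeping (the $(2\pi/d)^{n/2}$ prefactor, the $(4d)^{n/p}$ count, and the final $d^{n(1-1/p)}$) matches the paper's computation exactly.
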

The proof can be found in Appendix~\ref{ap:char_fct}.
This gives us a direct connection to the $\alpha-$stabilizer R\'enyi entropy defined for multi-qubits~\cite{leone2022stabilizer}, which has recently been shown to be a magic monotone under stabilizer protocols for $\alpha\geq 2$~\cite{Leone2024stabilizer}. 
Our result also provides an immediate generalization to all dimensions.
The natural extension of the  $\alpha-$stabilizer R\'enyi entropy to $n$-qudit state is
\begin{equation}\begin{aligned}
    &M_{\alpha}(\rho ) \\
    &= (1-\alpha)^{-1}\log\qty(d^{-n\alpha}\sum_{P \in \mathcal{P}_n^*} \abs{\Tr\qty(\rho P)}^{2\alpha})-n\log d\\
    &= \alpha(1-\alpha)^{-1}\log \norm{\Xi(\rho )}_\alpha-n\log d
\end{aligned}\end{equation}
where  $\mathcal{P}_n^*$ is the projective generalized Pauli (Heisenberg-Weyl) group which only contains $+1$ phase, $\Xi_{P}(\rho )=\frac{1}{d^n}\Tr\qty(\rho P)^2$ forms a probability distribution when $\rho $ is pure.

Thus, we see immediately by comparison that we can write all $\alpha$-stabilizer R\'enyi entropies with the $l_{2\alpha}$-norm of the continuous-variable characteristic function for the qudit state that the GKP state encodes.
Specifically, we have 
\bal
 M_\alpha(\rho ) &= \frac{2\alpha}{1-\alpha}\log\|\chi^\DV_\rho \|_{2\alpha}-n\log d\\
 & = \frac{2\alpha}{1-\alpha}\log\frac{\|\chi^\CV_{\rho _\GKP}\|_{2\alpha,\cell}}{\|\chi^\CV_{\STAB,\GKP}\|_{2\alpha,\cell}}-\frac{\alpha\, n\log d}{1-\alpha}
\eal
where in the second equality we used Theorem~\ref{thm:characteristic}.

We show the faithfulness property for the case $p=1$ in Ap.~\ref{ap:faith}. It is straight forward to see that the other properties include invariance under Clifford unitaries and multiplicativity.

\section{Simulation Algorithms}
\label{sec:simulator}
In this section, we provide simulation algorithms for qudit circuits that use magic measures based on the connections we established with the Wigner and characteristic functions.
We give the technical details in Appendix~\ref{ap:simulation}.
Furthermore, we provide a simulator for GKP circuits that surprisingly resembles the introduced qudit simulators.
The details can be found in Appendix~\ref{ap:Simulating GKP}.
This provides an operational interpretation of the magic measures as the simulation overhead incurred by the non-classical features.    

\subsection{Wigner function}
\label{Sec:sim_wig}
The magic measures that were inspired by the continuous Wigner function allow for an operational interpretation of the simulation cost of a quantum circuit.
 Pashayan \emph{et al.}~\cite{pashayan2015estimating} introduced a simulation algorithm for quasi-probability distributions that strictly resemble the discrete Wigner functions in odd dimensions.
These ideas were used to adapt the simulator to multi-qubit cases by Rall \emph{et al.}~\cite{rall2019simulation}.
Using our unified approach that works for all dimensions, we can extend the simulator by Pashayan \emph{et al.} to all dimensions and recover the simulator of Rall \emph{et al.} for $d=2$ (see Appendix~\ref{ap:simulation}).
Crucial ingredients to do this are the properties of the operators $O_{\bm{l},\bm{m}}$, especially how they transform under Clifford unitaries. Consult Appendix~\ref{ap:basis} for details.
The simulation time scales with the aggregated $l_1$-norm of the circuit, similarly as for the total forward Wigner negativity of Ref.~\cite{pashayan2015estimating}:

\begin{align}
    \mathcal{M}_\rightarrow =\norm{x_{\rho }}_1  \prod_{t=1}^T \max_{\bm{\lambda_t}} \norm{x_{U_t}(\bm{\lambda_t})}_1    \max_{\bm{\lambda_T}} \abs{x_{\Pi}(\bm{\lambda_T})}
\end{align}
where the maximum is taken over all trajectories and
\begin{align}
    x_{\rho }(\bm{\lambda})&=\Tr\qty(\rho  \frac{O_{\bm{\lambda}}}{d^n})\\
    x_{U}(\bm{\lambda'},\bm{\lambda})&=\Tr \qty(\frac{O_{\bm{\lambda'}}}{d^n }UO_{\bm{\lambda}}U^{\dagger}) \\
    x_{\Pi}(\bm{\lambda})&= \Tr\qty(\Pi O_{\bm{\lambda}}).
\end{align}
are the coefficients related to the input state $\rho $, the unitary evolution $U$ and the measurement effect $\Pi$ with $\lambda=(\bm{l},\bm{m})\in \mathbb{Z}_d^{2n}$.
The number of samples $K$ that achieves precision $\epsilon$ with a failure probability $p_f$ is given by 
\begin{align}
    K\geq 2 \mathcal{M}_\rightarrow^2 \frac{1}{\epsilon^2} \ln\qty(\frac{2}{p_f}).
\end{align}
This shows that the number of samples directly scales with the resourcefulness of the input state $\norm{x_{\rho }}_1 $ if we evolve using the free operations of our magic meausures like Clifford unitaries and measure in the computational basis.
This result gives an operational interpretation of the meausures discussed in this work, as already noted in the case for the discrete Wigner negativity~\cite{pashayan2015estimating}. 

A few comments on the difference between even and odd dimensional systems are in order.
 For odd dimensional systems, it holds $\norm{x_{\rho }}_1\geq 1 $, whereas it is possible for qubits that $\norm{x_{\rho }}_1 \leq 1$ for specific states which reduce the number of samples needed. These qubit states are discussed in~\cite{rall2019simulation} and are called hyperoctahedral states. 
 We show that this phenomenon exists in all even dimensions and call these states hyperpolyhedral states. See Appendix~\ref{ap:hyper} for details.
However, for pure states, it holds that $\norm{x_{\rho }}_1\geq 1 $.

Another interesting difference between even and odd dimensions when Pauli measurements are involved is that the simulator is less competitive in even dimensions.
The cost of the measurements is taken into account via the term
\begin{align}
    \max_{\bm{\lambda_T}} \abs{x_{\Pi}(\bm{\lambda_T})}.
\end{align}
Since for odd-dimensional systems $O_{\bm{l},\bm{m}}$ has trace 1, computational basis measurements do not increase the simulation time.
This is not the case for even dimensions. In this case, the measurements increase simulation time, as was noted by Rall \emph{et al.}~\cite{rall2019simulation}.

Let us assume that we would like to measure $k$-qudits of our $n$-qudit system in a computational basis state $\ket{\bm{i}}$. 
The measurement effect then is given as $\Pi=\mathds{1}_{n-k}\otimes \dyad{\bm{i}}$. Without loss of generality, assume the measurement of the state $\dyad{\bm{1}}$, the state with all measured qudit in the 1 state.
The expansion of a qudit in the operators $O_{l,m}$ is $\dyad{1}=\frac{1}{d}\sum_{i=1}^{d-1}O_{2,i}$.
The cost inferred from the measurement is then
\begin{align}
     &\max_{\bm{\lambda_T}} \abs{x_{\Pi}(\bm{\lambda_t})}= \max_{\bm{l},\bm{m}} \abs{\Tr\qty[O_{\bm{l},\bm{m}} \mathds{1}_{n-k}\otimes \dyad{\bm{1}} ]}\\
     = &\max_{\bm{l_{n-k}},\bm{m_{n-k}}}  \abs{\Tr \qty[ O_{\bm{l_{n-k}},\bm{m_{n-k}}} ]} \max_{\bm{l_{k}},\bm{m_{k}}} \abs{\Tr \qty[O_{\bm{l_k},\bm{m_k}} \dyad{\bm{1}} ]}.
\end{align}
The maximum trace $\abs{\Tr \qty[O_{\bm{l_k},\bm{m_k}} \dyad{\bm{1}} ]}$ is $1$ for both even and odd dimensions. However, for the first term there is a big difference between even and odd dimensions. For odd dimensions the trace of $O_{\bm{l},\bm{m}}$ is $\pm 1$, so unmeasured qudits do not add to the simulation cost in any way. This is not the case for even dimensions. In even dimensions the trace of a single qudit operator $O_{l,m}$ is either $0$ or $2$. 
Therefore, the maximum of the first term  $\abs{\Tr \qty[ O_{\bm{l_{n-k}},\bm{m_{n-k}}} ]}$ is $2^{n-k}$, and thus the number of unmeasured qudits increase the number of samples required exponentially.

However, this cost only plays a role if not all qudits are measured. Relevant examples where naturally all qudits are measured at the end of the computations are variational algorithms and sampling algorithms.
Also, the existence of hyperoctahedral states admits the simulation of noisy magic states at the cost of pure stabilizer states (or even less in some cases), which allows us to use the peculiar behavior of even-dimensional systems to our advantage.

\subsection{Characteristic function}
The same ideas can be used to construct a simulator that is based on characteristic functions. It will reduce to the simulator by Rall \emph{et al.}~\cite{rall2019simulation} for $d=2$.
The simulation cost scales with a resource quantified using magic measures based on the connection of the characteristic functions.

Instead of representing the quantum state in the basis of $O_{\bm{l},\bm{m}}$, we use the Heisenberg-Weyl operators $P_d(\bm{l},\bm{m})$ defined in Eq.~\eqref{eq:generalized Pauli definition} as our basis.
Since they are unitary and traceless (with the exception of the identity operator) a few small modification are in order.
A qudit state $\rho$ and its characteristic function $\chi^\DV_\rho$ can be written as
\bal
    \rho&= \frac{1}{d} \sum_{\bm{l},\bm{m}} \chi_\rho^\DV(\bm{l},\bm{m}) P_d(\bm{l},\bm{m})
\eal
with $ \chi_\rho^\DV(\bm{0},\bm{0})=1$, since $P_d(\bm{0},\bm{0})=\mathds{1}$.
Furthermore, since the density operator is Hermitian, it holds that
\bal
    \sum_{\bm{l},\bm{m}} \chi_\rho^\DV(\bm{l},\bm{m}) P_d(\bm{l},\bm{m}) &= \sum_{\bm{l},\bm{m}} \left[\chi_\rho^{\DV}(\bm{l},\bm{m})\right]^* P_d(\bm{l},\bm{m})^\dagger.
\eal
Therefore, many coefficients in the decomposition are redundant.
For Heisenberg-Weyl operators it holds that
\begin{align}
    P_d^\dagger (\bm{l},\bm{m})= \omega_d^{\bm{l}\cdot\bm{m}} P_d(-\bm{l},-\bm{m})
\end{align}
and therefore
\begin{align}
     \left[\chi_\rho^{\DV}(\bm{l},\bm{m})\right]^*=\omega_d^{\bm{l}\cdot\bm{m}}  \chi_\rho^\DV(-\bm{l},-\bm{m}).
\end{align}
This implies that we have only $\frac{d^2-1}{2}$ independent coefficients in the decomposition. 
Thus, we can only sample from the independent coefficients, since they are pairwise dependent.

The rest of the algorithm works equivalently. In particular, the simulation time scales with the aggregated $l_1$-norm:
\begin{align}
    \mathcal{M}_\rightarrow^{\chi} =\norm{ \chi_\rho^\DV}_1  \prod_{t=1} \max_{\bm{\lambda_t}} \norm{ \chi_{U_t}^\DV(\bm{\lambda_t})}_1    \max_{\bm{\lambda_T}} \abs{ \chi_\Pi^\DV(\bm{\lambda_T})}
\end{align}
with
\begin{align}
 \chi_\rho^\DV(\bm{\lambda})&= \Tr\qty(\rho \frac{P_d^\dagger(\bm{\lambda})}{d^n})\\\
 \chi_U^\DV(\bm{\lambda'},\bm{\lambda})&=\Tr\qty(\frac{P_d^\dagger(\bm{\lambda'})}{d^n} U P_d(\bm{\lambda})U^\dagger    )\\
     \chi_\Pi^\DV(\bm{\lambda'}) &= \Tr\qty(\Pi P_d(\bm{\lambda})).
\end{align}
The simulator behaves similarly to the previous one for even dimensions.
The same simulation time increase happens for unmeasured qudits in this case.

\subsection{Simulating GKP}\label{sec:simulation GKP}

The connection between the continuous Wigner function and the introduced discrete distributions can also be used to simulate the dynamics of GKP states. 
Here, we introduce a simulation algorithm for circuits that use ideal GKP codewords, Gaussian unitary operations and Gaussian measurements, i.e. homodyne detection.
The details can be found in Appendix~\ref{ap:Simulating GKP}. We remark that the works~\cite{calcluth2022efficient,calcluth2023vacuum} show a simulator that efficiently simulates GKP qubit stabilizer states with rational symplectic unitaries with displacements. 
On the other hand, our algorithm works beyond this restricted set of operations, encompassing all Gaussian unitaries and measurements.
Instead, our algorithm only weakly simulates the dynamics, while the one in \cite{calcluth2023vacuum} realizes strong simulation.

We want to simulate the run of a quantum circuit, but in contrast to the previous section, in the end we want to obtain a sample $\bm{x}$ of a homodyne measurement.
The sample $\bm{x}$ is drawn from the probability distribution of obtaining a certain measurement outcome $\bm{x}$ given by the Born probability
\begin{align}
    P(\bm{x})=\Tr[ \Pi_{\bm{x}} U_G \rho U_G^\dagger ]
\end{align}
where $\Pi_{\bm{x}}$ describes the measurement effect of obtaining measurement outcome $\bm{x}$.
However, a few comments are in order. First, GKP states are not quantum states, since they are non-normalizable. In consequence, if $\rho$ in the equation above is such an object, $P(x)$ is not a probability distribution, i.e. the integral over all measurement outcomes is not 1 but $\infty$. 
Even though such a quantity is not properly defined in a physical context, this idealized case can still give insights.
To differentiate between a probability $P(x)$ and the quantity we obtain by measuring an ideal GKP state, we call the latter $\Tilde{P}(x)$.

We will assume now an ideal GKP codeword encoding a multi-qudit state $\rho$. We then rewrite the ``probability'' we want to sample from as
\begin{align}
     \Tilde{P}(\bm{x})&= \int \dd \bm{r}\; W^{\CV}_{\Pi_{\bm{x}}}(\bm{r}) W^{\CV}_{U_G\rho_\GKP U_G^\dagger}(\bm{r})\\
     &= \frac{\sqrt{d}^n}{\sqrt{8\pi}^n} \sum_{\bm{u}=-\infty}^\infty x_{\rho}(\bm{u})  \int \dd \bm{p}\;\delta\left( [\bm{p},\bm{x}]^T+ S \bm{d}-\sqrt{\frac{\pi}{2d}} S \bm{u}\right),
\end{align}
where we used the atomic form defined in Eq.~\eqref{eq:atomic}.

We use the periodicity of  coefficients $ x_{\rho}(\bm{u})$ to simplify the expression further and obtain 
\begin{widetext}
  \begin{align}
     \Tilde{P}(\bm{x})=\frac{\sqrt{d}^n}{\sqrt{8\pi}^n} \sum_{\bm{u}\in \mathds{Z}_{2d}^{2n}} \text{sign}( x_{\rho}(\bm{u})  )\norm{x_{\rho}}_1 \frac{\abs{x_{\rho}(\bm{u})}}{\norm{x_{\rho}}_1}    \sum_{\bm{n}\in \mathds{Z}^{2n}}\delta\qty( \bm{x}+ \Tr_{\bm{p}}\qty[S \bm{d}-\sqrt{\frac{\pi}{2d}} S \bm{u}-\sqrt{2\pi d}S\bm{n}]).
\end{align}  
\end{widetext}
Thus to obtain a sample according to the equation above we first sample a $\bm{u}$ according to $\frac{\abs{x_{\rho}(\bm{u})}}{\norm{x_{\rho}}_1}$.
Then we sample uniformly a $\bm{n} \in \mathds{Z}_{2d}^{2n}$ and return $\bm{x} = \Tr_{\bm{p}}\qty[-S \bm{d}+\sqrt{\frac{\pi}{2d}} S \bm{u}+\sqrt{2\pi d}S\bm{n}]$ as the index of the measurement result.
Here another comment is in order. Sampling from the integers is not possible since the set is not closed. We can however use the form we derived by just investigating how the first unit cell of the GKP lattice evolves by setting $\bm{n}=0$.
Then the dynamics have the familiar form
\begin{widetext}
  \begin{align}
     \Tilde{P}(\bm{x})=\frac{\sqrt{d}^n}{\sqrt{8\pi}^n} \sum_{\bm{u}\in \mathds{Z}_{2d}^{2n}} \text{sign}( x_{\rho}(\bm{u})  )\norm{x_{\rho}}_1\frac{\abs{x_{\rho}(\bm{u})}}{\norm{x_{\rho}}_1}    \delta\qty( \bm{x}+ \Tr_{\bm{p}}\qty[S \bm{d}-\sqrt{\frac{\pi}{2d}} S \bm{u}])
\end{align}  
\end{widetext}
and the output sample depends only on the qudit distribution $ x_{\rho}$. Note the resemblance with the simulation algorithm presented in Sec.~\ref{Sec:sim_wig}.

\section{Magic needs non-Gaussianity}
\label{sec:nonc}

It has been known since the original GKP paper~\cite{gottesman2001encoding} that one can implement the logical $T$-gate and thus get an $H$-type magic state by using a cubic phase state or cubic interaction $e^{icQ^3}$. However, this is merely one possibility for implementing a non-Gaussian interaction, and this does not show the necessity of non-Gaussianity to implement a non-Clifford operation on the code subspace.
This is a widely held belief based on the correspondence between a pair of Pauli and displacement operators and that of Clifford and Gaussian operations, where displacement operators and Pauli operators are both Heisenberg-Weyl operators. 
However, this ``belief'' has not been proven in general, beyond specific cases in qubit systems~\cite{Yamasaki2020cost-reduced}, where the conversion between a computational basis state and the $H-$state was ruled out.
Indeed, GKP states have large Wigner negativity and thus \emph{a priori} additional non-Gaussianity may not be required, making the necessity of non-Gaussian operation to implement a non-Clifford operation nontrivial. 

Nevertheless, the results established above allow us to show that non-Gaussian operations are essential to implement non-stabilizer operations in the GKP code space. 
In fact, we find that the \emph{Gaussian protocols}~\cite{takagi2018convex}---a class of deterministic quantum channels larger than Gaussian operations, which also admits feed-forwarded Gaussian operations conditioned on the outcomes of Gaussian measurements---are not able to implement non-stabilizer operations in the GKP code space.
Importantly, Gaussian protocols include a gate teleportation circuit involving a Gaussian measurement and a feed-forwarded Gaussian unitary, which itself is not a Gaussian operation~\footnote{Here, we follow the standard definition where Gaussian operations are quantum channels that map Gaussian states to Gaussian states}.

\begin{theorem}
\label{th:clifford}
    Let $\Lambda$ be a quantum channel with $n$-qubit input and output. If there exists a pure stabilizer state $\phi$ and a pure non-stabilizer state $\psi$ such that $\Lambda(\phi)=\psi$, $\Lambda$ cannot be implemented in a GKP code space by a Gaussian protocol. Also, for a quantum channel $\Lambda$ with $n$-qudit input and output systems with odd local dimensions, the condition can be relaxed to the existence of a (potentially mixed) stabilizer state $\sigma$ and a state $\rho$ with $\|W_\rho^\DV\|_1>1$ such that $\Lambda(\sigma)=\rho$.
\end{theorem}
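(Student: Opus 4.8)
The plan is to argue by contradiction, combining the monotonicity of Wigner negativity under Gaussian protocols with the exact correspondence of Theorem~\ref{thm:CV-DV connection}. Suppose some Gaussian protocol $\mathcal{G}$ implemented $\Lambda$ in the GKP code space; then in particular $\mathcal{G}(\phi_\GKP)=\psi_\GKP$, where $\phi_\GKP$ and $\psi_\GKP$ are the GKP encodings of the logical input and output. The strategy is to compare the non-Gaussian resource content of these two encoded states: since Gaussian protocols are the free operations of the resource theory of non-Gaussianity and the Wigner negativity $\|W^\CV\|_1$ is a monotone under them~\cite{takagi2018convex}, the output cannot carry strictly more Wigner negativity than the input. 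Theorem~\ref{thm:CV-DV connection} at $p=1$ converts this continuous-variable statement into a statement about the discrete magic measure, since the renormalized per-cell negativity of $\rho_\GKP$ equals $\|x_\rho\|_1$; thus $\|W^\CV_{\phi_\GKP}\|_{1,\cell}$ corresponds to $\|x_\phi\|_1=1$ while $\|W^\CV_{\psi_\GKP}\|_{1,\cell}$ corresponds to $\|x_\psi\|_1$.

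The contradiction then follows once we establish $\|x_\psi\|_1>1$ strictly. For $n$-qubit systems this is exactly the faithfulness property stated above: a pure state has $\|x\|_1=1$ iff it is a stabilizer state, so a pure non-stabilizer $\psi$ forces $\|x_\psi\|_1>1=\|x_\phi\|_1$. For odd local dimension I would instead use the identity $\|x_\rho\|_1=\|W^\DV_\rho\|_1$ of Eq.~\eqref{eq:Wigner connection odd}: any (possibly mixed) stabilizer state $\sigma$ has a non-negative, normalized discrete Wigner function and hence $\|x_\sigma\|_1=\|W^\DV_\sigma\|_1=1$, whereas the hypothesis $\|W^\DV_\rho\|_1>1$ gives $\|x_\rho\|_1>1$. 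This is precisely why the odd-dimensional case can be relaxed to mixed stabilizer inputs and to the weaker negativity condition, since there the quantity is a faithful monotone on all mixed states, while for qubits one must restrict to pure non-stabilizer outputs.

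The delicate point, and the main obstacle, is that ideal GKP states are unnormalizable and carry infinite total Wigner negativity, so the monotone $\|W^\CV\|_1$ cannot be applied to $\phi_\GKP$ and $\psi_\GKP$ directly and only the per-cell quantity $\|W^\CV\|_{1,\cell}$ is finite. I would resolve this through a regularization argument: replace $\phi_\GKP$ by a sequence of normalizable, finitely-squeezed approximate GKP states $\phi_\GKP^{(\kappa)}$ with $\phi_\GKP^{(\kappa)}\to\phi_\GKP$ as $\kappa\to 0$, apply $\mathcal{G}$ together with the finite-$\kappa$ monotonicity $\|W^\CV_{\mathcal{G}(\phi_\GKP^{(\kappa)})}\|_1\le\|W^\CV_{\phi_\GKP^{(\kappa)}}\|_1$, and then control the limit. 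The crux is to show that the strict gap $\|x_\psi\|_1>\|x_\phi\|_1$ survives regularization, equivalently that monotonicity of the total negativity descends to the renormalized per-cell negativity supplied by Theorem~\ref{thm:CV-DV connection}. Because $\Lambda$ is by assumption implemented within the GKP code space, the output $\mathcal{G}(\phi_\GKP)$ shares the GKP lattice of $\psi_\GKP$, so the comparison can be organized cell by cell; making this interchange of limits rigorous, and verifying that the enlarged class of Gaussian protocols (including feed-forward, which is not itself Gaussian) leaves $\|W^\CV\|_1$ non-increasing in the regularized setting, is the technical heart of the argument.
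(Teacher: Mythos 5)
Your proposal follows essentially the same route as the paper: assume a Gaussian protocol $\mG$ with $\mG(\sigma_\GKP)=\rho_\GKP$, invoke monotonicity of Wigner negativity under Gaussian protocols, convert to the discrete quantity via Theorem~\ref{thm:CV-DV connection} at $p=1$ to get $\|x_\sigma\|_1\geq\|x_\rho\|_1$, and then derive the contradiction from faithfulness of $\|x_\cdot\|_1$ on pure states in the qubit case and from $\|x_\rho\|_1=\|W^\DV_\rho\|_1$ together with positivity of the discrete Wigner function on mixed stabilizer states in the odd case. (Incidentally, your values $\|x_\phi\|_1=1$ and $\|x_\psi\|_1>1$ are the correct ones; the paper's text contains a typo at this point, writing $0$ and $>0$.) The one place you diverge is the handling of the unnormalizability of ideal GKP states: you treat the descent of monotonicity from the total negativity $\|W^\CV\|_1$ to the per-cell quantity $\|W^\CV\|_{1,\cell}$ as an open technical heart, to be resolved by a finite-squeezing regularization $\phi_\GKP^{(\kappa)}\to\phi_\GKP$ with an interchange of limits. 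The paper does not do this: it directly asserts that $\|W^\CV\|_{1,\cell}$ itself is non-increasing under Gaussian protocols, citing Refs.~\cite{takagi2018convex,albarelli2018resource,hahn2022quantifying} (the last of which established exactly this cell-wise monotonicity for GKP-type states), with the $\delta(0)$ factors handled once and for all by the regularized norm of Appendix~\ref{ap:p-norm}, which cancels upon taking the ratio in Theorem~\ref{thm:CV-DV connection}. So your extra limiting argument, while a legitimate concern and a plausible alternative derivation, is not needed and is also not actually carried out in your proposal; relative to the paper you have simply relocated the citation-level input (cell-wise monotonicity) into an unexecuted regularization program. If you wanted your route to be self-contained you would indeed need to prove that Gaussian protocols, including measurement plus feed-forward, cannot concentrate negativity into a single unit cell in the limit, which is precisely the content of the cited result.
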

\begin{proof}
  Suppose that $\Lambda$ can be implemented in the GKP code space by a Gaussian protocol $\mG$, i.e., $\mG(\sigma_\GKP)=\rho_\GKP$ for qudit states $\sigma$ and $\rho$ such that $\Lambda(\sigma)=\rho$.
  Since $\|W_{\rho_\GKP}^\CV\|_{1,\cell}$ does not increase under Gaussian protocols~\cite{takagi2018convex,albarelli2018resource,hahn2022quantifying}, we get 
  \bal
   \|W_{\rho_\GKP}^\CV\|_{1,\cell} =\|W_{\mG(\sigma_\GKP)}^\CV\|_{1,\cell} \leq \|W_{\sigma_\GKP}^\CV\|_{1,\cell}.
   \label{eq:Wigner negativity monotonicity}
  \eal
Because of the assumption that $\rho$ is also an $n$-qudit state, Theorem~\ref{thm:CV-DV connection} and Eq.~\eqref{eq:Wigner negativity monotonicity} imply that 
\bal
 \|x_\sigma\|_1 &= \frac{\|W_{\sigma_\GKP}^\CV\|_{1,\cell}}{\|W_{\STAB_n, \GKP}^\CV\|_{1,\cell}}\\
 & \geq  \frac{\|W_{\rho_\GKP}^\CV\|_{1,\cell}}{\|W_{\STAB_n, \GKP}^\CV\|_{1,\cell}}\\
 & = \|x_\rho\|_1.
 \label{eq:non-Clifford operation inequality}
\eal  

Suppose that the input state $\sigma$ is a pure stabilizer state denoted by $\phi$ and the output state $\rho$ is a pure non-stabilizer state $\psi$. 
Since $\|x_\phi\|_1$ is faithful for pure states as shown in Sec.~\ref{subsec:Wigner properties}, i.e., for a pure state $\phi$, $\|x_\phi\|_1=1$ if and only if $\phi$ is a stabilizer state, we get $\|x_\phi\|_1=1$ and $\|x_\psi\|_1>1$.
This is a contradiction with \eqref{eq:non-Clifford operation inequality}, showing that such a channel $\Lambda$ cannot be implemented by a Gaussian protocol.

The statement for odd dimensions follows by the same argument using the relation \eqref{eq:Wigner connection odd}.

\end{proof}
An immediate consequence is that a Gaussian protocol cannot implement non-Clifford unitary gates deterministically. 
This does not contradict the protocol by Baragiola \emph{et al.}~\cite{baragiola2019all}, which requires many auxiliary GKP states---making the whole operation involving the preparation of such auxiliary states highly non-Gaussian---to apply a single non-Clifford gate. 
In addition, their protocol is probabilistic and, therefore, does not directly fall into the scope of our result, which is pertinent to deterministic operations.

We stress that the statement of Theorem~\ref{th:clifford} directly benefits from the connection between the DV and CV resources established in this work. In particular, the general connection between DV and CV resource measures in Theorem~\ref{thm:CV-DV connection} and the property of magic measure in Proposition~\ref{prop:measure_properties} play crucial roles in the proof of Theorem~\ref{th:clifford}.

\section{Conclusion and Outlook}
\label{sec:conclusion}

In this work, we establish a quantitative relationship between magic and non-Gaussianity through the Gottesman-Kitaev-Preskill encoding, which serve as the central resources for the key operational frameworks used to study quantum computation in both discrete and continuous variables.  Furthermore, our work provides a tool to analyse resources in the setting when qudit systems are encoded in CV systems.
We introduced a family of distributions for discrete-variable systems and showed that their $l_p$-norm exactly corresponds to that of the continuous Wigner function that encodes the same qudit states via GKP encoding. 
Notably, the discrete-variable distribution coincides with the discrete Wigner function for odd dimensions, allowing us to connect the negativity of Wigner functions of discrete and continuous variables for $p=1$.
More generally, our distributions allow for defining a magic meausure for all dimensions and extend the discrete Wigner negativity defined for odd dimensions and the stabilizer R\'enyi entropy defined for multi-qubit systems in a unified manner. 
Furthermore, we showed that the $l_p$-norm of the discrete-variable characteristic function corresponds to the characteristic function of a GKP state that encodes the same qudit state. This provides a new interpretation of the stabilizer R\'enyi entropy in terms of the GKP encoding and naturally extends it to all dimensions.
By employing this framework, we find an operational interpretation of the magic measures by introducing a classical simulation algorithm, where the run time scales with the magic content. The first algorithm is based on the magic measures connected to the continuous Wigner function and recovers the simulator in Ref.~\cite{pashayan2015estimating}, while the second one is based on the magic measures connected to the continuous characteristic function. 
Both algorithms give a strong operational interpretation to the magic measures we introduced.
Then we presented a weak simulation algorithm for ideal GKP circuits. This algorithm improves state of the art by allowing all Gaussian unitaries as well as all qudit states encoded in GKP. This algorithm bores many similarities to the qudit simulators we introduced before. 
We utilized our findings to demonstrate that achieving a deterministic implementation of a logical non-Clifford operation, with identical input and output dimensions within the GKP code subspace, necessitates a non-Gaussian operation, even when operating at the theoretical limit of ideal GKP state input. 
We conjecture that the faithfulness property holds independently of the dimension and not only for odd dimensional and multi-qubit systems.  An implication is that Theorem~\ref{th:clifford} would hold for all dimensions, where one could replace multi-qubit systems with any even-dimensional ones.
We leave that conjecture for future work.

Our framework offers a novel approach to analyze non-Gaussian and magic resources in a mutual way: tools developed for infinite-dimensional systems can be used to describe properties of finite-dimensional systems, and vice-versa. We hypothetise that further cross-fertilization between these two worlds is possible, allowing for investigating more properties in the light of our approach. 
Furthermore, we have seen that the magic measures defined in this work behave differently for even and odd dimensions. An interesting future direction is to further investigate the origin of this behavior. Finally, being able to investigate and see the dependence of the dimensionality could shed new light on the source of quantum speed-ups.

\emph{Note added.}---During the completion of this manuscript, a related independent work by Lingxuan Feng and Shunlong Luo~\cite{Feng2024connecting} was brought to our attention, where the authors found a complementary relation between the description of a single qudit state and continuous-variable Wigner function of the corresponding GKP state.

\begin{acknowledgments}
We thank Kaifeng Bu, Ulysse Chabaud, Cameron Calcluth, and Nicolas Menicucci for helpful discussions. G.F. acknowledges support from the Swedish Research Council (Vetenskapsrådet) through the project grant DAIQUIRI, as well as from the HORIZON-EIC-2022-PATHFINDERCHALLENGES-01 programme under Grant Agreement Number 101114899 (Veriqub). G.F., O.H. acknowledge support from the Knut and Alice Wallenberg Foundation through the Wallenberg Center for Quantum Technology (WACQT). R.T. acknowledges the support of JSPS KAKENHI Grant Number JP23K19028, JP24K16975, JST, CREST Grant Number JPMJCR23I3, Japan, and MEXT KAKENHI Grant-in-Aid for Transformative
Research Areas A ``Extreme Universe” Grant Number JP24H00943.
\end{acknowledgments}


\newpage
\appendix

\section{ $l_p$ norm and renormalization}
\label{ap:p-norm}
In this section, we formalize a way to compute the $l_p$ norm of the characteristic function as well as the Wigner function of GKP states.
 Note here that the $l_1$ norm of the Wigner function is the Wigner negativity.
 The $l_p$ norm is defined as
\begin{align}
    \norm{f}_p=\qty(\int \dd\bm{x}  \abs{f(\bm{x})}^p )^{\frac{1}{p}}.
\end{align}

We are interested in computing the $l_p$-norm of characteristic and Wigner functions of GKP states, 
so we deal with sums of Dirac distribution, where the distributions have disjoint support
\begin{equation}\begin{aligned}
  &\qty(\int_{-\infty}^\infty \dd{\bm{x}}\abs{ \sum_i f_i(\bm{x})  \delta\qty(\bm{x} - \bm{x}_i)}^p  )^\frac{1}{p}\\
  &=\qty(\sum_i \abs{f_i(\bm{x}_i )}^p \delta(0)^{p-1}   )^\frac{1}{p}= \qty(\sum_i \abs{f_i(\bm{x}_i )}^p  )^\frac{1}{p} \delta(0)^\frac{p-1}{p} .
\end{aligned}\end{equation}
This integral evaluates to the same Dirac distribution $\delta(0)^{(p-1)/p}$ for all GKP states, which will be canceled by dividing it by the $l_p$-norm for another GKP state as in Theorems~\ref{thm:CV-DV connection}~and~\ref{thm:characteristic}. 
Therefore, we will define the norm as
\begin{equation}\begin{aligned}
    &\qty(\int_{-\infty}^\infty \dd{\bm{x}}\abs{ \sum_i f_i(\bm{x})  \delta\qty(\bm{x} - \bm{x}_i)}^p  )^\frac{1}{p}\\ 
    &=\qty( \int_{-\infty}^\infty \dd{\bm{x}}\abs{ \sum_i f_i(\bm{x})}^p  \delta\qty(\bm{x} - \bm{x}_i) )^\frac{1}{p}\\
    &= \qty(\sum_i \abs{f_i(\bm{x}_i )}^p  )^\frac{1}{p}
\end{aligned}\end{equation}
as a kind of regularization.


\section{Proof of Property~\ref{item:free minimum} of $\|x_\rho\|_1$}
\label{app:free minimum}
In this section, we prove a property of $\|x_\rho\|_1$  among those listed in Sec.~\ref{sec:bridge}, specifically that $\|x_\phi\|_1=1$ for every pure stabilizer state $\phi$ and $\|x_\psi\|_1\geq 1$ for every pure state $\psi$.  

We have the requirement for a pure state $\psi$ that
\bal
    \Tr(\psi ^2)=\sum_{\bm{l},\bm{m}} x_{\psi}(\bm{l},\bm{m})^2 d^n&=1,
\eal
which implies
\bal
\sum_{\bm{l},\bm{m}}\Tr(O_{\bm{l},\bm{m}} \psi )^2=d^n.
\eal
Recalling that $O_{\bm{l},\bm{m}}$ has eigenvalues $\pm 1$, it holds that $|\Tr(O_{\bm{l},\bm{m}}\psi)|\leq 1,\ \forall \bm{l},\bm{m}$.  
This gives
\bal
 \sum_{\bm{l},\bm{m}}|\Tr(O_{\bm{l},\bm{m}} \psi )| \geq \sum_{\bm{l},\bm{m}}\Tr(O_{\bm{l},\bm{m}} \psi )^2=d^n,
\eal
showing $\|x_\psi\|_1\geq 1$ for every pure state $\psi$.

Let $\phi$ be an arbitrary pure stabilizer state. 
Proposition~\ref{pro:pure stabilizer state} ensures that $|x_\phi(\bm{l},\bm{m})|=1/d^n$ for $d^n$ elements, leading to 
\bal
 \|x_\phi\|_1 = \sum_{\bm{l},\bm{m}}|x_\phi(\bm{l},\bm{m})| = \frac{1}{d^n}\cdot d^n = 1,
\eal
completing the proof.

\section{Proof of Proposition~\ref{prop:coefficients relation}}
In this section, we derive the atomic form of an $n$-qudit state encoded in the Gottesman-Kitaev-Preskill (GKP) code. We call the representation atomic if each Dirac distribution with different support appears only once in the summation, thus all Dirac distributions are distinct. 
We start deriving the atomic form for one qudit encoded in GKP and then generalize it to $n$-qudit systems.
\label{ap:atomic}
\subsection{One Qudit}
For a single qudit, the Wigner function of a computational basis state $\dyad{j}$ encoded in GKP are  
\begin{equation}\begin{aligned}
    &W_{\dyad{j}_\GKP}^{\CV}(r_q,r_p)\\
    &\propto \sum_{s,t=-\infty}^\infty (-1)^{st} \delta(r_p-\frac{\pi}{d\alpha}s) \delta(r_q-\alpha j - \frac{d\alpha}{2}t).
\end{aligned}\end{equation}
In order to derive the Wigner function of an arbitrary qudit state $\rho = \sum_{u,v \in \mathds{Z}_d} \rho _{u,v}\ket{u}\bra{v}$  encoded in the GKP code, we expand our state in the computational basis
\begin{widetext}
\begin{align}
\label{eq:GKP1}
    &W_{\rho_\GKP }^{\CV}(r_p,r_q) = \sum_{u,v \in \mathds{Z}_d} \rho _{uv}   \frac{1}{2\pi}\int_{-\infty}^\infty \dd{x} e^{i r_p x}\qty[\sum_{s = - \infty}^\infty \delta\qty(r_q+\frac{x}{2} - \sqrt{\frac{2\pi}{d}} (u + d s) )]    \qty[\sum_{t = - \infty}^\infty \delta\qty(r_q-\frac{x}{2} - \sqrt{\frac{2\pi}{d}} (v + dt) )] .
\end{align}
\end{widetext}

We then use the linearity of the Wigner function and get cross terms between the computational basis states $j$ and $k$
\newpage
\begin{widetext}
\bal
W_{\ket{j}\bra{k}_\GKP}^{\CV}(r_p,r_q) &= \frac{1}{2\pi}    \int_{-\infty}^\infty \dd x e^{ir_px}\qty[\sum_{s = - \infty}^\infty \delta\qty(r_q+\frac{x}{2} - \sqrt{\frac{2\pi}{d}} (j + d s) )]    \qty[\sum_{t = - \infty}^\infty \delta\qty(r_q-\frac{x}{2} - \sqrt{\frac{2\pi}{d}} (k + dt) )]\\
&=\frac{1}{\pi} \sum_{s,t} e^{2ir_p (r_q-\sqrt{\frac{2\pi}{d}}(k+dt))}\delta\qty(2r_q-\sqrt{\frac{2\pi}{d}}[j+k+ds+dt])\\
&=\frac{1}{ 2\pi}\sum_{s,t} e^{2ir_p (r_q-\sqrt{\frac{2\pi}{d}}(k+dt))}\delta\qty(r_q-\sqrt{\frac{\pi}{2d}}[j+k+ds+dt])\\
&=\frac{1}{ 2\pi}\sum_{s,t} e^{2ir_p (r_q-\sqrt{\frac{2\pi}{d}}(k+dt-ds))}\delta\qty(r_q-\sqrt{\frac{\pi}{2d}}[j+k+dt])\\
&=\frac{1}{ 2\pi}\sum_{s,t} e^{2ir_p\sqrt{\frac{2\pi}{d}}ds}e^{2ir_p(r_q-\sqrt{\frac{2\pi}{d}}(k+dt))}\delta\qty(r_q-\sqrt{\frac{\pi}{2d}}(j+k+dt))\\
&=\frac{1}{ 2\pi}\sum_{s,t} \delta\qty(\sqrt{\frac{2d}{\pi}}r_p-s) \delta\qty(r_q-\sqrt{\frac{\pi}{2d}}(j+k+dt))e^{2ir_p(\sqrt{\frac{\pi}{2d}}(j+k+dt)-\sqrt{\frac{2\pi}{d}}(k+dt))}  \\
&= \frac{1}{2\sqrt{2\pi d}}\sum_{s,t}\delta\qty(r_p-s\sqrt{\frac{\pi}{2d}})\delta\qty(r_q-\sqrt{\frac{\pi}{2d}}(j+k+dt))e^{ir_p\sqrt{\frac{2\pi}{d}}(j-k-dt)},
\label{eq:GKPjk}
\eal
\end{widetext}
where in the second last line we used the Poisson resummation formula
\begin{align}
    \sum_{n=-\infty}^\infty e^{i2\pi n x}=\sum_{k=-\infty}^\infty \delta\qty(x-k).
\end{align}
We simplify \eqref{eq:GKP1} by using~\eqref{eq:GKPjk} and arrive at

\begin{widetext}
\begin{align}
    W_{\rho_\GKP }^{\CV}(r_p,r_q)=\frac{1}{\sqrt{8\pi d}}\sum_{u,v=0}^{d-1}\rho _{u,v}\sum_{s,t}(-1)^{s(u-v-dt)/d} \delta\qty(r_p-s\sqrt{\frac{\pi}{2d}}) \delta\qty(r_q-\sqrt{\frac{\pi}{2d}}(u+v+dt)).
     \label{eq:GKPqudit}
\end{align}
\end{widetext}

We need to find the coefficients $c_{\rho_\GKP}(l,m)$ such that
\begin{equation}\begin{aligned}
     &W_{\rho_\GKP }(r_p,r_q)\\
     &=\frac{\sqrt{d}}{\sqrt{8\pi }}\sum_{l,m} c_{\rho_\GKP}(l,m)\delta\qty(r_p-m\sqrt{\frac{\pi}{2d}})\delta\qty(r_q-l\sqrt{\frac{\pi}{2d}})
     \label{eq:atomic_app}
\end{aligned}\end{equation}
only has disjoint support for each Dirac distribution in the summation.

By inspection of Eq.~\eqref{eq:GKPqudit}, we immediately see that $\delta\qty(r_p-s\sqrt{\frac{\pi}{2d}}) $ is already in the correct form and thus will only contribute a phase with $s=m$.
Furthermore, we restrict the GKP state to one until the cell of length $\sqrt{2d\pi}$, so each $m,l$ can have $2d$ values $m,l\in \{0,1,...,2d-1\}$.
For now, let us consider $s=m=0$.
Then we get the same Dirac distribution for $q$ if $u+v+dt=l$. This requirement can be simplified if we remember that we consider only a unit cell and thus $u+v \mod d = l$.
Consequently the matrix element $c_{\rho_\GKP}(l,0)$ will be a sum of $\rho _{u,v}$ with $u+v\mod d =l$.
We can write this as
\begin{align}
    c_{\rho_\GKP}(l,0)=d^{-1}\Tr\qty(\rho  M_l)
\end{align}
with 
\begin{align}
    M_l=\sum_{\substack{u,v \in \mathds{Z}_d\\ u+v =l \; {\rm mod} \;d}} \ket{u}\bra{v}.
\end{align}
As an example, if we take qubits $d=2$
\begin{align}
    M_0&=\begin{pmatrix}
        1&0\\
        0 &1
    \end{pmatrix}\quad
    M_1=\begin{pmatrix}
        0&1\\
        1&0\\
    \end{pmatrix}
\end{align}
so we retrieve the identity and Pauli $X$.
For qutrits $d=3$ we get
\begin{align}
    M_0&=\begin{pmatrix}
        1&0 &0\\
        0 &0 &1\\
        0&1&0
    \end{pmatrix}\,
    M_1=\begin{pmatrix}
    0&1 &0\\
        1 &0 &0\\
        0&0&1
    \end{pmatrix}\,
        M_2=\begin{pmatrix}
         0&0 &1\\
        0 &1 &0\\
        1&0&0
    \end{pmatrix}
\end{align}

We now consider the general case with $m\neq 0$.
Recalling $l=u+v+dt$, the contribution for $m\neq 0$ is given by the phase factor
\bal
    \sum_m (-1)^{m(u-v-dt)/d} &= \sum_{m} (-1)^{m(2u-l)/d}\\
    &= \sum_m e^{i \pi m (2u-l)/d} = e^{-i\pi ml/d}\omega_d^{mu} 
\eal
where $\omega_d=e^{2\pi i/d}$ is the \textit{d}\,th root of unity. 
This allows us to obtain the general form of matrix elements in \eqref{eq:atomic_app} as 
\begin{align}
    c_{\rho_\GKP}(l,m)&=d^{-1}e^{-i\pi ml/d}\Tr\qty(M_l Z_d^m \rho )\\
    &=d^{-1}\Tr\qty(O_\rho(l,m) \rho )\\
    & = x_\rho(l,m).
\end{align}
This shows Proposition~\ref{prop:coefficients relation} in the case of $n=1$.


\subsection{$n$-Qudits}
In this section, we will derive the multi-qudit atomic form of GKP states.
The state of an arbitrary $n$-qudit state is given as
$\rho = \sum_{\bm{u},\bm{v} \in \mathds{Z}_d^n} \rho _{\bm{u},\bm{v}}\ket{\bm{u}}\bra{\bm{v}}$. 
The Wigner function for the GKP state that encodes this $n$-qudit state is then 
\begin{widetext}
\begin{equation}\begin{aligned}
    &W_{\rho_\GKP }^{\CV}(\bm{r}) \\
    &= \sum_{\bm{u},\bm{v} \in \mathds{Z}_d^n} \rho _{\bm{u},\bm{v}} \prod_{i=1}^n   \frac{1}{2\pi}\int_{-\infty}^\infty \dd{r_{x_i}} e^{i r_{p_i} r_{x_i}}\qty[\sum_{s_i = - \infty}^\infty \delta\qty(r_{q_i}+\frac{r_{x_i}}{2} - \sqrt{\frac{2\pi}{d}} (u_i + d s_i) )]    \qty[\sum_{t_i = - \infty}^\infty \delta\qty(r_{q_i}-\frac{r_{x_i}}{2} - \sqrt{\frac{2\pi}{d}} (v_i + dt_i) )] \\
 &= \frac{1}{(\sqrt{8\pi d})^n}  \sum_{\bm{u},\bm{v} \in \mathds{Z}_d^n} \rho _{\bm{u},\bm{v}} \prod_{i=1}^n \qty[ \sum_{s_i,t_i}(-1)^{\frac{s_i}{d}(u_i-v_i-dt_i)}    \delta\qty(r_{p_i}-\sqrt{\frac{\pi}{2d}}  s_i) \delta\qty(r_{q_i}-\sqrt{\frac{\pi}{2d}}(dt_i+u_i+v_i)) ].
 \label{eq:Wigner n qudit original}
  \end{aligned}\end{equation}
\end{widetext}

We are now ready to show Proposition~\ref{prop:coefficients relation} by confirming that 
\bal
W_{\rho_\GKP }^{\CV}(\bm{r})&=\frac{\sqrt{d}^n}{\sqrt{8\pi}^n}\sum_{\bm{l},\bm{m}} c_{\rho_\GKP}(\bm{l},\bm{m})\\
&\qquad\times \delta\qty(\bm{r_p}-\bm{m}\sqrt{\frac{\pi}{2d}})\delta\qty(\bm{r_q}-\bm{l}\sqrt{\frac{\pi}{2d}})
\label{eq:atom Wigner n qubits}
\eal
coincides with \eqref{eq:Wigner n qudit original} by taking $c_{\rho_\GKP}(\bm{l},\bm{m})=x_\rho(\bm{l},\bm{m})$.
We note that 
\begin{equation}\begin{aligned}
    &x_\rho(\bm{l},\bm{m})\\
    &=d^{-n}e^{-i\pi\bm{m} \cdot \bm{l}/d} \Tr\qty(M_{l_1}\otimes...\otimes  M_{l_n} Z_d^{m_1}\otimes ... \otimes Z_d^{m_n} \rho  )\\
    &= d^{-n}e^{-i\pi\bm{m} \cdot \bm{l}/d} \Tr\qty( M_{\bm{l}} Z_d^{\bm{m}} \rho   )
    \label{eq:matrix elements n qubits}
\end{aligned}\end{equation}
with
\begin{equation}\begin{aligned}
     &\Tr\qty( M_{\bm{l}} Z_d^{\bm{m}} \rho   )\\
     &\quad = \sum_{\bm{u},\bm{v} \in \mathds{Z}_d^n} \rho _{\bm{u},\bm{v}} \bra{\bm{v}} M_{l_1}\otimes ... \otimes M_{l_n} Z_d^{m_1}\otimes ...\otimes Z_d^{m_n}  \ket{\bm{u}}\\
     &\quad = \sum_{\bm{u},\bm{v} \in \mathds{Z}_d^n} \rho _{\bm{u},\bm{v}} \omega_d^{m_1u_1}...\omega_d^{m_n u_n} \bra{\bm{v}} M_{l_1}\otimes ... \otimes M_{l_n}  \ket{\bm{u}}.
\end{aligned}\end{equation}

Note that $\bra{\bm{v}} M_{l_1}\otimes ... \otimes M_{l_n}  \ket{\bm{u}}=1$ when
\begin{align}
    &u_i+v_i+dt_i=l_i
\end{align}
and $\bra{\bm{v}} M_{l_1}\otimes ... \otimes M_{l_n}  \ket{\bm{u}}=0$ otherwise. 
Consequently, the Wigner function \eqref{eq:atom Wigner n qubits} with the coefficients \eqref{eq:matrix elements n qubits} becomes
\begin{widetext}
\bal
    W_{\rho_\GKP }^{\CV}(\bm{r})&=\frac{\sqrt{d}^n}{\sqrt{8\pi }^n}\sum_{\bm{l},\bm{m}} x_\rho(\bm{l},\bm{m})\delta\qty(\bm{r_p}-\bm{m}\sqrt{\frac{\pi}{2d}})\delta\qty(\bm{r_q}-\bm{l}\sqrt{\frac{\pi}{2d}})\\
    &=\frac{1}{\sqrt{8\pi d}^n}\sum_{\bm{l},\bm{m}} \sum_{u,v \in \mathds{Z}_d^n} \rho _{uv}  e^{-i\pi\bm{m} \cdot \bm{l}/d} \omega_d^{m_1u_1}...\omega_d^{m_n u_n}  \bra{v} M_{l_1}\otimes ... \otimes M_{l_n}  \ket{u} \qty(\bm{r_p}-\bm{m}\sqrt{\frac{\pi}{2d}})\delta\qty(\bm{r_q}-\bm{l}\sqrt{\frac{\pi}{2d}})\\
    &=\frac{1}{\sqrt{8\pi d}^n} \sum_{u,v \in \mathds{Z}_d^n} \rho _{uv} \prod_{i=1}^n \sum_{m_i,t_i} \omega_d^{-m_i (u_i+v_i+dt_i)/2} \omega_d^{m_i u_i} \qty(r_{p_i}-m_i\sqrt{\frac{\pi}{2d}})\delta\qty(r_{q_i}-\sqrt{\frac{\pi}{2d}}(u_i+v_i+dt_i ))\\
    &=\frac{1}{\sqrt{8\pi d}^n} \sum_{u,v \in \mathds{Z}_d^n} \rho _{uv} \prod_{i=1}^n \sum_{m_i,t_i} (-1)^{m_i(u_i-v_i-dt_i)/d} \qty(r_{p_i}-m_i\sqrt{\frac{\pi}{2d}})\delta\qty(r_{q_i}-\sqrt{\frac{\pi}{2d}}(u_i+v_i+dt_i ))
\eal
\end{widetext}
which coincides with \eqref{eq:Wigner n qudit original}.
This completes the proof of Proposition~\ref{prop:coefficients relation}.

\section{Properties of the operator basis}
\label{ap:basis}

\subsection{Basic properties}
In this section, we will show the properties of the operator
\begin{align}
    O_{l,m}= e^{-i \pi m l/d} M_l Z_d^m .
\end{align}
As mentioned in the main text, the parameters are $l,m \in \mathbb{Z}_{2d}$. 
We will first show a property that involves all  $l,m \in \mathbb{Z}_{2d}$. It holds that
\begin{align}
    \sum_{ l,m \in \mathbb{Z}_{2d}} O_{l,m}&=  \sum_{ l,m \in \mathbb{Z}_{2d}} \sum_{x=0}^{d-1} \omega_d^{-m(\frac{l}{2}-x)} \ketbra{-x+l}{x}\\
    &=  \sum_{ l \in \mathbb{Z}_{2d}} \sum_{x=0}^{d-1} \sum_{ m \in \mathbb{Z}_{2d}}\omega_d^{-m(\frac{l}{2}-x)} \ketbra{-x+l}{x}\\
    &= \sum_{x=0}^{d-1} \sum_{ l \in \mathbb{Z}_{2d}} \delta_{l,2x} \ketbra{-x+l}{x}\\
    &=\mathds{1}.
\end{align}
Thus by summing over all $l,m \in \mathbb{Z}_{2d}$ we can resolve the identity using the operators $O_{l,m}$.

Using the operators, however, as an operators basis we do not need all  $l,m \in \mathbb{Z}_{2d}$.
It suffices to restrict to $l,m \in \mathbb{Z}_d$.
If we now have a value above $d$, we have
\begin{align}
    M_l&=M_{l+d}\\
    Z_d^m &= Z_d^{m+d}.
\end{align}
However, the phases can be different
\bal
   O_{l+d,m}&=  e^{-i \pi m (l+d)/d}M_l Z_d^m \\
   &=(-1)^{m} O_{l,m} \\
   O_{l,m+d}&= e^{-i \pi l (m+d)/d}M_l Z_d^m \\
   &=(-1)^{l} O_{l,m} \\
    O_{l+d,m+d}&= e^{-i \pi (m+d) (l+d)/d}M_l Z_d^m\\
    &= (-1)^{l+m+d} O_{l,m}.
    \label{eq:operatior basis phases}
\eal
So we get the same operators with a different sign.

It is easy to see that $M_l$ is Hermitian. $M_l$ is also an involution meaning $M_l^2=\mathds{1}$ because
\begin{align}
    M_l M_l &= \sum_{\substack{u+v =l \ {\rm mod}\,d\\
    u'+v'=l \ {\rm mod}\,d}}\ket{u}\bra{v}\ket{u'}\bra{v'}\\
    &=\sum_{\substack{u+v =l \ {\rm mod}\,d\\v+v' =l \ {\rm mod}\,d}}\ket{u}\bra{v'}\\
    &= \sum_{u=v'\ {\rm mod}\,d } \ket{u}\bra{v'}\\
    &= \sum_u \dyad{u}.
\end{align}
Therefore,
\bal
    O_{l,m} O_{l,m}^{\dagger} &= M_lZ_d^m \left(Z_d^m\right)^{\dagger}M_l^{\dagger}\\
    &=M_lM_l\\
    &=\mathds{1}.
    \label{eq:unitarity}
\eal
This confirms that $O_{l,m}$ is unitary.
The operator $O_{l,m}$ is also Hermitian
\bal
    O_{l,m}^{\dagger} &= \qty[e^{-i \pi m l/d}M_l Z_d^m     ]^{\dagger} \\
    &= \qty[ e^{-i \pi m l/d}\sum_{u+v =l\ {\rm mod}\,d}  \omega_d^{vm} \ket{u} \bra{v}  ]^{\dagger}\\
    &=  e^{i \pi m l/d}\sum_{u+v =l\ {\rm mod}\,d}  \omega_d^{-vm} \ket{v} \bra{u} \\
    &= e^{-i \pi m l/d}\sum_{u+v =l\ {\rm mod}\,d}  \omega_d^{m(l-v)} \ket{v} \bra{u} \\
    &=  e^{-i \pi m l/d}\sum_{u+v =l\ {\rm mod}\,d}  \omega_d^{um} \ket{v} \bra{u}\\
    &=O_{l,m}
\eal
and thus also an involution  
\begin{align}
    O_{l,m}O_{l,m}=\mathds{1}
\end{align}
because of \eqref{eq:unitarity}.
This implies that its eigenvalues are $\pm 1$.

The operators are orthogonal under the Hilbert-Schmidt inner product
\bal
    \Tr \qty(O_{l,m} O_{l',m'})&= e^{-i \pi m l/d} e^{-i \pi m' l'/d}\Tr\qty(M_l Z_d^m M_{l'}Z_d^{m'})\\
    &=\sum_{\substack{u+v =l\ {\rm mod}\,d\\v'+v\ {\rm mod}\,d=l'}}\omega_d^{mv} \omega_d^{m'v'}e^{-i \pi m l/d} e^{-i \pi m' l'/d}\\
    &\times \Tr \qty(  \ket{u}\bra{v'} )\\
    &=\sum_{\substack{u+v=l\ {\rm mod}\,d\\u+v=l'\ {\rm mod}\,d}}\omega_d^{mv+m'u}e^{-i \pi m l/d} e^{-i \pi m' l'/d}.
\eal
This is 0 if $l\neq l'$ because $l,l'\in[0,d-1]$.
For $l=l'$, we get
\bal
   \Tr \qty(O_{l,m} O_{l,m'}) &=\sum_{u+v=l\ {\rm mod}\,d} \omega_d^{mv}\omega_d^{m'u} e^{-i \pi m l/d} e^{-i \pi m' l/d}\\
    &=\sum_{u+v=l\ {\rm mod}\,d} e^{i\pi(m-m')(v-u)/d}\\
    &=\sum_{v\in\mathbb{Z}_d}\omega_d^{\Tilde{m}v}  e^{-i\pi\Tilde{m}l/d}\\
    &= e^{-i\pi\Tilde{m}l/d} \sum_{v\in\mathbb{Z}_d} \omega_d^{\Tilde{m}v}
\eal
where we set $\tilde m \coloneqq m-m'$.
This is $0$ if $\Tilde{m}\neq0$, i.e., $m\neq m'$ because a sum over roots of unity is 0.
On the other hand, when $l=l'$ and $m=m'$ we get
\begin{align}
     \Tr \qty(O_{l,m} O_{l,m}) = \sum_{v\in\mathbb{Z}_d} 1 = d.
\end{align}

In conclusion, we have
\begin{align}
     \Tr \qty(O_{l,m} O_{l',m'}) = \delta_{mm'}\delta_{ll'}d.
\end{align}

As we have seen earlier, $O_{l,m}$ are the standard Pauli operators for $d=2$. 
So for $d>2$ the operators $O_{l,m}$ are a generalization of the Pauli operators to arbitrary dimensions with the property of being an involution and Hermitian.
In general, $M_l$ and $O_{l,m}$ behave differently for even and odd dimensions, so we will separate the discussion.

\subsubsection{Odd dimensions}
In this section, we assume that the dimension $d$ is odd.
Then, the trace is
\begin{align}
    \Tr\qty(O_{l,m})&=\sum_{2x=l\ {\rm mod}\,d}e^{-i \pi m l/d} \omega_d^{mx}.
\end{align}
When $l$ is even, the solution for $2x=l\mod{d}$ is $x=l/2$ and gives $\Tr(O_{l,m})=1$. 
When $l$ is odd, the solution for $2x=l\mod{d}$ is $x=(d+l)/2$, which gives $\Tr(O_{l,m})=(-1)^m$.
These can concisely be written as 
\begin{align}
    \Tr\qty(O_{l,m})=(-1)^{ml}.
\end{align}

The operator $M_l$ has more structure, as seen in Eq.~\eqref{eq: ml def} which we report here for convenience
\bal
    M_l&=\sum_{u+v=l\ {\rm mod}\,d}\ket{u}\bra{v}
\eal
There are $d$ possibilities to fulfill this equation, so the matrix representation of $M_l$ in computational basis will have $d$ ones and the rest 0.
Furthermore, the diagonal entries $2u \mod d = l$ have only one solution for every $l$. Therefore, the matrix $M_l$ has one diagonal term $1$ and is zero otherwise.
Consequently $\Tr\qty(M_l)=1$.

Recall that we introduced the operators $O_{l,m}$ as the ones that connect the GKP state to the qudit state it encodes.
Remarkably, we can establish a direct connection between this and the phase space point operators in odd dimensions, which a priori may not have anything to do with the operators $O_{l,m}$.
The phase space point operators are defined as 
\bal
    A(a_1,a_2)
    &=d^{-1} \sum_{b_1,b_2=0}^{d-1}e^{-2i\frac{\pi}{d} (a_1,a_2)\Omega_d (b_1,b_2)^T } \omega_d^{b_1b_2/2}\\
    &\qquad\times \left(X_d^{\dagger}\right)^{b_1} \left(Z_d^{\dagger}\right)^{b_2}\\
    &= d^{-1} \sum_{b_1,b_2}\omega_d^{a_1 b_2} \omega_d^{b_1(\frac{1}{2}b_2-a_2)} \left(X_d^\dagger\right)^{b_1} \left(Z_d^\dagger\right)^{b_2}.
\eal
Using the expansion of $Z_d$ and $X_d$ in the computational basis 
\bal
      Z_d^b&=\sum_x \omega_d^{bx}\dyad{x}\\
      Z_d^{b,\dagger}&=\sum_x \omega_d^{-bx}\dyad{x}=Z(-b)\\
      X_d^b&=\sum_x \ket{x+b}\bra{x}\\
       X_d^{b\dagger}&=\sum_x \ket{x}\bra{x+b}=X(-b)
\eal
 we can write the phase space point operator as 
\bal
    A(a_1,a_2)&=d^{-1} \sum_x \sum_{b_1 b_2}  \omega_d^{b_2(a_1+\frac{1}{2}b_1)} \omega_d^{-a_2 b_1} X_d^{-b_1} Z_d^{-b_2}\\
   &= d^{-1} \sum_x \sum_{b_1 b_2}  \omega_d^{b_2(a_1-x+\frac{1}{2}b_1)} \omega_d^{-a_2 b_1} \ket{x-b_1}\bra{x}.
\eal
In order to further simplify, we need the discrete resummation formula
\begin{align}
    \frac{1}{d}\sum_{k=0}^{d-1} e^{2i\pi \frac{kn}{d}}=\delta_{0,n}.
\end{align}
The phase space point operators can then be simplified to
\begin{equation}\begin{aligned}
&A(a_1,a_2)\\
\quad&=d^{-1} \sum_x \sum_{b_1 b_2}  \omega_d^{b_2(a_1-x+\frac{1}{2}b_1)} \omega_d^{-a_2 b_1} \ket{x-b_1}\bra{x}\\
    &= \sum_x \sum_{b_1} \delta_{0, a_1-x+\frac{1}{2}b_1}\omega_d^{-a_2 b_1} \ket{x-b_1}\bra{x}\\
    &=\sum_x \sum_{b_1} \delta_{b_1,2(x-a_1)}\omega_d^{-a_2 b_1} \ket{x-b_1}\bra{x}\\
    &= \sum_x \omega_d^{-2 a_2(x-a_1)}  \ket{x-2(x-a_1)}\bra{x}\\
    &= \sum_x  \omega_d^{2a_1 a_2}  \omega_d^{-2 a_2 x} \ket{-x+2a_1}\bra{x}.
\end{aligned}\end{equation}
Using the following substitutions 
\bal
    u&=-x+2a_1\\
    v&=x\\
    u+v&=2a_1=l
\eal
we rewrite the phase space point operators as
\bal
    A(l,a_2) &= \sum_{u+v\ {\rm mod}\,d=l}  \omega_d^{-a_2 l}  \omega_d^{-2 a_2 v} \ket{u}\bra{v}\\
     &= \sum_{u+v\ {\rm mod}\,d=l}  \omega_d^{a_2 l}  \omega_d^{-2 a_2 u} \ket{u}\bra{v}.
\eal

By comparing this equation with the definition of $O_{l,m}$, we can identify $m=-2a_2$ and we get

\begin{align}
 A(a_1,a_2) = O_{2a_1,-2a_2}.
\end{align}
This shows that the operator basis $\{O_{l,m}\}_{l,m\in\mathbb{Z}_d}$ is equivalent to the phase space point operators $\{A(a_1,a_2)\}_{a_1,a_2\in\mathbb{Z}_d}$ up to permutation and phase factors. 
Indeed, the oddness of $d$ and \eqref{eq:operatior basis phases} ensure that there is a one-to-one correspondence between $a_1,a_2\in\mathbb{Z}_d$ and $l,m\in\mathbb{Z}_d$ such that $A(a_1,a_2)=O_{2a_1,-2a_2}\propto O_{l,m}$ up to phase, as $2a_2$ and $-2a_1$ respectively takes all values in $\mathbb{Z}_{d}$ by changing $a_1,a_2\in\mathbb{Z}_d$, and $O_{l_1,m_1}$ and $O_{l_2,m_2}$ coincide up to phase if $l_1=l_2\ {\rm mod}\,d$ and $m_1=m_2\ {\rm mod}\,d$.

\subsubsection{Even dimensions}

Now we investigate the operators $O_{l,m}$ for the case of even-dimensional systems.
The trace is
\begin{align}
    \Tr\qty(O_{l,m})&=\sum_{x} e^{-i \pi ml/d} \sum_{u+v=l\ {\rm mod}\,d}\omega^{mu}\bra{x}  \ket{u}\bra{v}  \ket{x}\\
    &=\sum_{2x=l\ {\rm mod}\,d}e^{-i \pi ml/d} \omega^{mx},
\end{align}
which vanishes if $l$ is odd. 
Suppose $l$ is even and write $l=2k$. Then we get 
    \begin{align}
    \Tr\qty(O_{l,m})&=\sum_{2x=l\ {\rm mod}\,d}e^{-i \pi ml/d} \omega^{mx}\\
    &=\omega^{-mk} \qty(\omega^{mk} +\omega^{m(k+\frac{d}{2})})\\
    &=1+\omega^{m\frac{d}{2}}\\
    &=1+(-1)^m.
\end{align}
We have the same behavior for the operators $M_l$. For odd dimensions the equation $u+v \mod d =l$ has $d$ solutions. Therefore, the matrix representation in computational basis will have $d$ 1's with the rest being 0.
For odd $l$, the equation $2u \mod d = l $ has no solution, implying $\Tr\qty(M_l)=0$ for odd $l$. For even $l$, the equation
$2u\ {\rm mod}\,d= l$ has two solutions, $u=l/2$ and $u= l/2 +d/2$.
Therefore, we have  $\Tr\qty(M_l)=2$.

We can make here an interesting observation. It is known that for odd dimensions the phase space point operator at the origin $A(0,0)$ acts as the parity operation
\begin{align}
    A(0,0)\ket{x}=\ket{-x}
\end{align}
for a computational basis state $\ket{x}$.
The operators $O_{l,m}$ show the same behavior for even dimensions and thus for all dimensions
\begin{align}
    O_{0,0}\ket{x}= M_0\ket{x}=\ket{-x}.
\end{align}

\subsection{Clifford Covariance}
\label{ap:cliffordcov}
This section investigates how the operators $O_{l,m}$ transform under Clifford unitaries. We will see that they behave almost equivalently to the Heisenberg-Weyl operators.

$X_d,Z_d$ generate the $d-$dimensional Heisenberg-Weyl group and $R,P,\text{SUM}$ the $d-$dimensional Clifford unitaries.
their action of computational basis state are~\cite{farinholt2014ideal} 
\begin{align}
    X_d:&\ket{j}\rightarrow\ket{j+1}\\
    Z_d:&\ket{j}\rightarrow \omega_d^{j}\ket{j}\\
    R:&\ket{j}\rightarrow \sum_{s=0}^{d-1}\omega_d^{js}\ket{s}\\
    P:&\ket{j}\rightarrow \omega_d^{j^2/2} (\omega_D \omega_{2d}^{-1})^{-j}\ket{j}\\
    \text{SUM}:& \ket{i}\ket{j}\rightarrow\ket{i}\ket{i+j}
\end{align}

A Clifford unitary $U_C$ maps the Heisenberg-Weyl operators 
\begin{align}
    P_d(a,b)= \omega_{d}^{\frac{1}{2} a b  } X_d^a Z_d^b
\end{align}
in the following way
\begin{align}
U_CP_d(u)U_C^{\dagger}= P_d(Su)    
\end{align} 
where $S$ is a $2n \times 2n$ matrix with entries over $\mathds{Z}_D$ with $D=d$ for $d$ odd and $D=2d$ for $d$ even.
Heisenberg-Weyl operators have the following commutation relations
\begin{align}
    \qty(X_d^a Z_d^b)  \qty(X_d^{a'} Z_d^{b'})= \omega_d^{(a,b)\Omega (a',b')^T}   \qty(X_d^{a'} Z_d^{b'}) \qty(X_d^a Z_d^b).
\end{align}

We are now interested in how Clifford unitaries and Pauli operators transform
\begin{align}
    O_{l,m}=\sum_{x=0}^{d-1}e^{-i \pi ml/d}  \omega_d^{mx}\ket{-x+l}\bra{x}
\end{align}
with $l,m \in [0,2d-1]$ or equivalently $\mathds{Z}_{2d}$. 
The values the computational basis states can have are $\mod d$ and the operators $Z_d^m, M_l$ are repeating for $m,l\geq d$. The difference for $m,l\geq d$ is the phase factor $\omega_d^{-ml/2}$ that repeats after $2d$.
This phase factor is important for the action of Clifford unitaries on the operators $O_{l,m}$, while it can be essentially neglected if we only want to use them as a basis.
We expect by our construction through GKP that Clifford unitaries map $O_{\bm{l},\bm{m}}$ to another $O_{\bm{l}',\bm{m}'}$.

The QFT gate $R$ transforms the operator $O_{l,m}$ as
\begin{align}
    RO_{\bm{l},\bm{m}}R^{\dagger} &=\sum_{x=0}^{d-1} e^{-i \pi ml/d}  \omega_d^{mx} R\ket{-x+l}\bra{x}R^{\dagger}\\
    &=\sum_{s,s'} e^{-i \pi ml/d} \omega_d^{sl} \sum_x \omega_d^{x(m-s-s')} \ket{s}\bra{s'}\\
    &=\sum_{s,s'} e^{-i \pi ml/d} \omega_d^{sl}   \delta_{s,m-s'}  \ket{s}\bra{s'}\\
    &=\sum_{s'}  e^{-i \pi ml/d} \omega_d^{m l} \omega_d^{-ls'} \ket{m-s'}\bra{s'}\\
    &=\sum_x e^{-i \pi ml/d}\omega_d^{-lx}\ket{m-x}\bra{x}
\end{align}
and therefore transforms the coordinates like
\begin{align}
    m&\rightarrow -l\\
    l&\rightarrow m.
\end{align}

The Phase gate $P$ behaves differently for even and odd dimensions.
For odd dimensions, we get
\bal
    P O_{\bm{l},\bm{m}} P^{\dagger} &=\sum_x \omega_d^{-m l/2}\omega_d^{mx} \omega_d^{(-x+l)(-x+l-1)/2}\omega_d^{-x(x-1)/2}\\
    &\qquad\times\ket{-x+l}\bra{x}\\
    &=\sum_x \omega_d^{-m l/2}\omega_d^{mx} \omega_d^{(l^2-l)/2} \omega_d^{x(1-l)}\ket{-x+l}\bra{x}\\
    &=\sum_x  \omega_d^{-l(m-l+1)/2}  \omega_d^{x(m-l+1)} \ket{-x+l}\bra{x}
\eal
with the coordinates transforming like
\begin{align}
    m&\rightarrow m-l+1\\
    l&\rightarrow l.
\end{align}

For even dimensions we nearly get the same result
\begin{align}
    P O_{\bm{l},\bm{m}} P^{\dagger} &=\sum_x e^{-i \pi ml/d} \omega_d^{mx} e^{i\pi(-x+l)^2/d}e^{-i\pi x^2/d}\ket{-x+l}\bra{x}\\
    &=\sum_x e^{-i \pi ml/d}\omega_d^{mx} e^{i\pi(l^2)/d} \omega_d^{-xl}\ket{-x+l}\bra{x}\\
    &=\sum_x  e^{-i\pi l(m-l)/d}  \omega_d^{x(m-l)} \ket{-x+l}\bra{x}
\end{align}
and
\begin{align}
    m&\rightarrow m-l\\
    l&\rightarrow l.
\end{align}

The action of $Z_d$ transform $O_{l,m}$ as
\begin{align}
    Z_dO_{\bm{l},\bm{m}}Z_d^{\dagger} &=\sum_x e^{-i \pi ml/d}\omega_d^{mx} \omega_d^{-x+l}\omega_d^{-x}\ket{-x+l}\bra{x}\\
    &=\sum_x  e^{-i\pi l(m-2)/d} \omega_d^{x(m-2)}\ket{-x+l}\bra{x}
\end{align}
with the coordinates transforming as
\begin{align}
    m&\rightarrow m-2\\
    l&\rightarrow l.
\end{align}
Similarly for $X_d$ we get
\begin{align}
    X_dO_{\bm{l},\bm{m}}X_d^{\dagger}&=\sum_x e^{-i \pi ml/d} \omega_d^{m x} \ket{-x+l+1}\bra{x+1}\\
    &=\sum_x e^{-i \pi ml/d} \omega_d^{m (x-1)} \ket{-x+l+2}\bra{x}\\
        &=\sum_x   e^{-i\pi m(l+2)/d}    \omega_d^{m x} \ket{-x+l+2}\bra{x}
\end{align}
and
\begin{align}
    m&\rightarrow m\\
    l&\rightarrow l+2.
\end{align}

The missing gate for the full Clifford group is the SUM gate
\begin{widetext}

\begin{align}
    \text{SUM}O_{\bm{l},\bm{m}}\otimes O_{\bm{l}',\bm{m}'}\text{SUM}^{\dagger} &=\sum_{x,y} e^{-i \pi ml/d}  e^{-i \pi m'l'/d} \omega_d^{mx} \omega_d^{m'y}
\ket{-x+l, -y-x+l'}\bra{x,y+x}\\
&=\sum_{x,y'} e^{-i \pi ml/d}  e^{-i \pi m'l'/d} \omega_d^{mx} \omega_d^{m'(y'-x)}
\ket{-x+l, -y'+l'+l}\bra{x,y'}\\
&=\sum_{x,y'} e^{-i \pi ml/d}  e^{-i \pi m'l'/d} \omega_d^{x(m-m')} \omega_d^{m'y'}
\ket{-x+l, -y'+l'+l}\bra{x,y'}
\end{align}

and
\end{widetext}
\begin{align}
    m&\rightarrow m-m'\\
    l&\rightarrow l\\
  m'&\rightarrow m'\\
    l'&\rightarrow l'+l.
\end{align}

So we can write down the matrices that transform the coordinates under the action of Clifford unitaries $U_CO_{\bm{l},\bm{m}}U_C^{\dagger} = O_{M_{U_C} [l,l',m, m']^T}$.
The matrices have the basis $(l,m)$ or $l,l',m,m'$, without the constant shifts and are given as

\begin{align}
    P:&  \begin{pmatrix}
        1&0\\
        -1&1
    \end{pmatrix}\\
    R:&\begin{pmatrix}
        0&1\\
        -1&0
    \end{pmatrix}\\
    \text{SUM}:&
    \begin{pmatrix}
        1&0&0&0\\
        1&1&0&0\\
        0&0&1&-1\\
        0&0&0&1\\
    \end{pmatrix}
\end{align}    

These matrices are all symplectic and therefore they fulfill the relation
\begin{align}
    &M^T \Omega M=\Omega\\
    &\Omega=\begin{pmatrix}
        0&\mathds{1}\\
        -\mathds{1}&0.
    \end{pmatrix}
\end{align}

 Interestingly, Clifford unitaries act on the Heisenberg-Weyl operators in a very similar way.
The matrices $S$ in the basis $a,b$ and $a_1,a_2,b_1,b_2$ are given as~\cite{farinholt2014ideal}
\begin{align}
       P':&  \begin{pmatrix}
        1&0\\
        1&1
    \end{pmatrix}\\
    R':&\begin{pmatrix}
        0&-1\\
        1&0
    \end{pmatrix}\\
    \text{SUM}:&
    \begin{pmatrix}
        1&0&0&0\\
        1&1&0&0\\
        0&0&1&-1\\
        0&0&0&1\\
    \end{pmatrix} 
\end{align}
where these matrices are over $\mathds{Z}_D$ ($D=d$ for odd $D=2d$ for even).
It was shown that these matrices generate the symplectic group over  $\mathds{Z}_D$.
We can connect our matrices (up to constant shifts) to these matrices
\begin{align}
    R^3&=R'=\begin{pmatrix}
        0&-1\\
        1&0
    \end{pmatrix}\\
    P^{d-1}&=P'=\begin{pmatrix}
        1&0\\
        1&1
    \end{pmatrix}
\end{align}
while the SUM gate is already in the correct form. So the action of a Clifford unitary on $O_{\bm{l},\bm{m}}$ is a symplectic transformation over $\mathds{Z}_{2d}$ of $l,m$. There is, however, a constant shift for $P$ in the odd dimensional case.

\subsection{Stabilizer states (Proof of Proposition~\ref{pro:pure stabilizer state})}
\label{ap:stab}
This section shows that pure stabilizer states are flat when decomposing in the operators $O_{l,m}$ meaning all operators have the same weight modulo signs.
We can write the zero state in all dimensions as 
\bal
    \dyad{0}&=\dyad{0}+\frac{1}{d}\sum_{j}\sum_{\substack{u+v=0\ {\rm mod}\,d\\v\neq 0}}w_d^{jv}\ket{u}\bra{v}\\
    &=\frac{1}{d}\sum_{j=0}^{d-1}O_{0,j}\\
    &=\frac{1}{d}\sum_{j} M_0 Z_d^j\\
    &=\frac{1}{d}\sum_j \sum_{u+v\ {\rm mod}\,d=0}\omega_d^{jv}\ket{u}\bra{v}
\eal
where in the first line we used that
\begin{align}
    \sum_{j=0}^{d-1} \omega_d^{jv}=0
\end{align}
for $v\neq 0$. We see that all nonzero $O_{l,m}$ have the same weight. Using that the operators $O_{l,m}$ are covariant under Clifford unitaries as shown in Appendix~\ref{ap:cliffordcov}, we see that every pure stabilizer state has a flat weight.


\section{Proof of Theorem~\ref{thm:characteristic}}
\label{ap:char_fct}
We compute the characteristic function for a qudit state $\rho $ encoded in a GKP state.
We need to use the following property of the displacement operator
\begin{align}
    D(\bm{r})= \prod_{j=1}^n e^{i r_{q_j} r_{p_j}/2}e^{i r_{q_j} P_j} e^{-i r_{p_j} Q_j};
\end{align}
Then, the characteristic function of a qudit encoded in a GKP state is given as
\begin{widetext}

\bal
    \chi_{\rho_\GKP }^{\CV}(\bm{r})&=\Tr\qty[\rho  D(-\bm{r})]\\
    &=\sum_{u,v\in \mathbb{Z}_d}\rho _{u,v} \bra{u}D(-\bm{r})\ket{v}\\
    &=\sum_{u,v\in \mathbb{Z}_d}\rho _{u,v}\sum_{s,t=-\infty}^\infty\bra{\sqrt{\frac{2\pi}{d}}(u+ds)} D(-\bm{r})\ket{\sqrt{\frac{2\pi}{d}}(v+dt)}_q\\
    &=\sum_{u,v}\sum_{s,t} \rho _{u,v}e^{i r_q r_p/2} e^{ir_p\sqrt{\frac{2\pi}{d}}(v+dt)} \bra{\sqrt{\frac{2\pi}{d}}(u+ds)}\ket{\sqrt{\frac{2\pi}{d}}(v+dt)+r_q}_q\\
    &=\sum_{u,v}\sum_{s,t} \rho _{u,v}e^{i r_q r_p/2} e^{ir_p\sqrt{\frac{2\pi}{d}}(v+dt)} \delta\qty(r_q-\sqrt{\frac{2\pi}{d}}(u-v-d(t-s)))\\
    &=\sum_{u,v}\sum_{s,t} \rho _{u,v} e^{i 2\pi s(r_p \sqrt{\frac{d}{2\pi}})} e^{i\frac{r_p}{2}(r_q+2\sqrt{\frac{2\pi}{d}}(v+dt))} \delta\qty(r_q-\sqrt{\frac{2\pi}{d}}(u-v-dt))\\
    &=\sqrt{\frac{2\pi}{d}}\sum_{u,v}\sum_{s,t} \rho _{u,v} \delta\qty(r_p-\sqrt{\frac{2\pi}{d}}s)\delta\qty(r_q-\sqrt{\frac{2\pi}{d}}(u-v-dt)) e^{i\frac{r_p}{2}(r_q+2\sqrt{\frac{2\pi}{d}}(v+dt))}\\
    &=\sqrt{\frac{2\pi}{d}}\sum_{u,v}\sum_{s,t} \rho _{u,v} e^{i\frac{\pi}{d}s(u+v+dt) }\delta\qty(r_p-\sqrt{\frac{2\pi}{d}}s)\delta\qty(r_q-\sqrt{\frac{2\pi}{d}}(u-v-dt)).
\eal
    
\end{widetext}

With this expression, we aim at finding the coefficient $\gamma_{\rho_\GKP}(l,m)$ for $l,m\in\mathds{Z}_d$ such that 
\begin{equation}\begin{aligned}
&\chi_{\rho_\GKP }^{\CV}(\bm{r})\\
&= \sqrt{\frac{2\pi}{d}}\sum_{l,m=-\infty}^\infty \gamma_{\rho_\GKP}(l,m)\\
&\quad \times\delta\qty(r_p-m\sqrt{\frac{2\pi}{d}})  \delta\qty(r_q-l\sqrt{\frac{2\pi}{d}}).
\end{aligned}\end{equation}
Similarly to the case for the Wigner function, we restrict to one unit cell.
Thus, the requirement for $l$ is
\begin{align}
    u-v-dt=l\\
    u-v =l \mod d
\end{align}
and therefore
\begin{align}
    \gamma_{\rho_\GKP}(l,0)=\sum_{u-v  =l \mod d}  \bra{v}\rho \ket{u} =\Tr\qty[ X_d^{l} \rho   ].
\end{align}
By simplifying the phase factor
\begin{align}
     e^{i\frac{\pi}{d}s(v+u+dt) } = e^{i\frac{\pi}{d}s (-l+2u)}= e^{-i\pi sl/d} \omega_d^{su}
\end{align}
we can write the coefficients as 
\begin{align}
    \gamma_{\rho_\GKP}(l,m)=e^{-i\pi sl/d}\Tr\qty[\rho  X_d^{l} Z_d^{m}  ].
\end{align}
Therefore, the coefficients are given by the trace over the Pauli operators in $d$ dimensions.
This can be generalized to $n$ qudits as 
\bal
    \gamma_{\rho_\GKP}(\bm{l},\bm{m})&=e^{-i\pi \bm{l}\cdot \bm{m}/d}\Tr\qty[\rho  X^{\bm{l}} Z^{\bm{m}}]\\
    & = d^n e^{-i\pi \bm{l}\cdot \bm{m}/d} \omega_d^{ -\bm{l}\cdot \bm{m}/2} \chi^\DV_\rho(\bm{l},\bm{m})^*,
\eal
which shows \eqref{eq:characteristic function coefficients}. 

Using \eqref{eq:characteristic function coefficients} and the fact that $|\chi_\phi^\DV(\bm{l},\bm{m})|=d^{-n}$ for $(4d)^n$ elements in $\bm{l},\bm{m}\in\mathds{Z}_{2d}^n$ and zero otherwise, we can get for a pure qudit state $\phi$ that 
\bal
 \|\chi_{\phi_\GKP}^\CV\|_{p,\cell} = (4d)^{n/p}\left(\frac{2\pi}{d}\right)^{n/2}.
\eal
We then get, again by using \eqref{eq:characteristic function coefficients}, that
\bal
 \|\chi_{\rho_\GKP}^\CV\|_{p,\cell} &= \left[\sum_{\bm{l},\bm{m}\in\mathds{Z}_{2d}^n}\left\{\left(\frac{2\pi}{d}\right)^{n/2} d^n|\chi_\rho^\DV(\bm{l},\bm{m})|\right\}^p \right]^{1/p}\\
 & = \left[4^n\sum_{\bm{l},\bm{m}\in\mathds{Z}_{d}^n}\left\{\left(2\pi d\right)^{n/2}|\chi_\rho^\DV(\bm{l},\bm{m})|\right\}^p \right]^{1/p}\\
 & = 4^{n/p} (2\pi d)^{n/2} \|\chi_\rho^\DV\|_p\\
 & = d^{n(1-1/p)}\|\chi_\rho^\DV\|_p,
\eal
completing the proof.


\section{Simulation algorithm}
\label{ap:simulation}
For the convenience of the reader we will use the frame notation from the works of~\cite{pashayan2015estimating}
\begin{align}
    F(\bm{\lambda})&=\frac{O_{\bm{\lambda}}}{d^n}\\
    G(\bm{\lambda})&= O_{\bm{\lambda}}\\
    \rho  &= \sum_\lambda \Tr\qty(\rho  \frac{O_{\bm{\lambda}}}{d^n}) O_\lambda=\sum_{\bm{\lambda}}G(\bm{\lambda}) \Tr\qty(\rho  F(\bm{\lambda})).
\end{align}
Unitary evolution of a state can be rewritten in this notation as
\begin{align}
    U \rho  U^{\dagger} &=\sum_{\bm{\lambda}} U G(\bm{\lambda})U^{\dagger} \Tr\qty(\rho  F(\bm{\lambda}))\\
    &=\sum_{\bm{\lambda},\bm{\lambda'}} G(\bm{\lambda'})\Tr \qty(F(\bm{\lambda'})U G(\bm{\lambda})U^{\dagger}) \Tr\qty(\rho  F(\bm{\lambda}))
\end{align}
and the output of a measurement $\Pi$ is
\begin{align}
    \Tr\qty(\Pi U\rho  U^{\dagger})
    &= \sum_{\bm{\lambda},\bm{\lambda'}} x_{\Pi}(\bm{\lambda'}) x_{U}(\bm{\lambda'},\bm{\lambda})x_{\rho }(\bm{\lambda})
\end{align}
with
\begin{align}
    x_{\rho }(\bm{\lambda})&=\Tr\qty(\rho  F(\bm{\lambda}))\\
    x_{U}(\bm{\lambda'},\bm{\lambda})&=\Tr \qty(F(\bm{\lambda'})U G(\bm{\lambda})U^{\dagger}) \\
    x_{\Pi}(\bm{\lambda})&= \Tr\qty(\Pi G(\bm{\lambda})).
\end{align}
Out of these quantities we can define the following probability distributions
\begin{align}
    P(\bm{\lambda}|\rho )&=\frac{\abs{x_{\rho }(\bm{\lambda})}}{\norm{x_{\rho }}_1}\\
    P(\bm{\lambda'}|U,\bm{\lambda})&=\frac{\abs{x_{U}(\bm{\lambda'},\bm{\lambda})}}{\norm{x_U(\bm{\lambda})}_1}\\
    \norm{x_{U}(\bm{\lambda})}_1 &= \sum_{\bm{\lambda'}}\abs{x_{U}(\bm{\lambda'},\bm{\lambda})}\\
    \norm{x_{\rho }}_1 &= \sum_{\bm{\lambda}}\abs{x_{\rho }(\bm{\lambda})}.
\end{align}
Thus we can rewrite the  Born rule probability as 
\begin{align}
    P(\Pi|U\rho  U^{\dagger} )&=\sum_{\bm{\lambda},\bm{\lambda'}} x_{\Pi}(\bm{\lambda'}) x_{U}(\bm{\lambda'},\bm{\lambda})x_{\rho }(\bm{\lambda})\\
    &=\sum_{\bm{\lambda},\bm{\lambda'}} M_{\bm{\lambda},\bm{\lambda'}} P(\bm{\lambda'}|U,\bm{\lambda})P(\bm{\lambda}|\rho )
\end{align}
with $M_{\bm{\lambda},\bm{\lambda'}}= \text{sign}\qty(x_{\rho}(\bm{\lambda}) x_{U}(\bm{\lambda'},\bm{\lambda})) x_{\Pi}(\bm{\lambda'})   \norm{x_{U}(\bm{\lambda})}_1  \norm{x_{\rho }}_1$.
The simulation strategy is to sample $\bm{\lambda} $ from $P(\bm{\lambda}|\rho )$ and then consider a possible transition to $\bm{\lambda'}$ from $P(\bm{\lambda'}|U,\bm{\lambda})$. This can easily be generalized to a sequence of unitaries of length $T$ as well.
We then define a random variable as 
\begin{align}
    M_{\Vec{\bm{\lambda}}} &= x_{\Pi}(\bm{\lambda_T}) \text{sign}(x_{\rho }(\bm{\lambda_0})) \norm{x_{\rho }}_1 \\
    &\times \prod_{t=1}^T \text{sign}(x_{U_t}(\bm{\lambda_{t}},\bm{\lambda_{t-1}}))\norm{x_{U_t}(\bm{\lambda_{t-1}})}_1.
\end{align}

The expectation value of this random variable is
\begin{align}
    \mathds{E}(M_{\Vec{\bm{\lambda}}})&= \sum_{\Vec{\bm{\lambda}}}P(\bm{\lambda_0}|\rho )\prod_{t=1}^T P(\bm{\lambda_t}|U_t,\bm{\lambda_{t-1}})  M_{\Vec{\bm{\lambda}}}\\
    &= \sum_{\bm{\Vec{\lambda}}} x_{\Pi}(\bm{\lambda_T})\prod_{t=1}^T x_{U_t}(\bm{\lambda_t},\bm{\lambda_{t-1}}) x_{\rho }(\bm{\lambda_0})
\end{align}
which is exactly the Born probability we want to estimate. The random variable output from our sampling algorithm is an unbiased estimator for the Born probability.
The number of samples needed to achieve a given precision can be computed using the Hoeffding inequality
Given a sequence of $K$ $iid$ random variables $X_j$ bounded by $\abs{X_j}\leq b$ and expected mean $\mathds{E}(X)$, the probability that $\sum_{j=1}^K X_j/K$ deviates from the mean by more than $\epsilon$ is upper bounded by
\begin{align}
    P\qty(\abs{\mathds{E}(X)-\sum_{j=1}^K \frac{X_j}{K}}\geq \epsilon )\leq 2 \exp\qty(-\frac{K\epsilon^2}{2b^2})
\end{align}
or equivalently  we can achieve precision $\abs{\mathds{E}(X)-\sum_{j=1}^K \frac{X_j}{K}}\leq \epsilon$ with probability at least $(1-p_f)$ by setting the number of samples as 
\begin{align}
    K=\left\lceil 2b^2 \frac{1}{\epsilon^2} \ln \qty(\frac{2}{p_f}) \right\rceil.
\end{align}
We then define the aggregated $l_1$ norm as 

\begin{align}
    \mathcal{M}_\rightarrow =\norm{x_{\rho }}_1  \prod_{t=1} \max_{\bm{\lambda_t}} \norm{x_{U_t}(\bm{\lambda_t})}_1    \max_{\bm{\lambda_T}} \abs{x_{\Pi}(\bm{\lambda_T})}
\end{align}
so it is the maximum $l_1$-norm over all trajectories.
This bounds the random variable from above, 
so we need at least
\begin{align}
    K\geq 2 \mathcal{M}_\rightarrow^2 \frac{1}{\epsilon^2} \ln\qty(\frac{2}{p_f})
\end{align}
samples.

\section{Hyperpolyhedral states}
\label{ap:hyper}
In this section, we investigate the phenomena of hyperpolyhedral states. In~\cite{rall2019simulation} the authors encounter hyperoctahedral states for qubit systems. They define these states as the states which have the stabilizer norm smaller than 1 or in our formulation  $ \sum_{\bm{l},\bm{m}} \abs{x_{\bm{l},\bm{m}}}<1$.
For odd dimensional states these states are equivalent to Wigner positive states  $ \sum_{\bm{l},\bm{m}} \abs{x_{\bm{l},\bm{m}}}=1$. This set is strictly bigger than the set of stabilizer states. 

For even dimensions, the question of a Wigner function is more difficult, especially related to computability, even though one can define such a quantity~\cite{raussendorf2017contextuality,bermejo2017contextuality,raussendorf2020phase}. We define the Hyperpolyhedral states similarly to the qubit case for all even dimensions with $\sum_{\bm{l},\bm{m}} \abs{x_{\bm{l},\bm{m}}}\leq 1$.
Here we show that hyperpolyhedral states exist for all even dimensions and that they are not equivalent to stabilizer states.

The computational basis states can be expanded in the operators $O_{l,m}$ as
\begin{align}
    \dyad{0}&=\frac{1}{d}\sum_{i=0}^{d-1} O_{0,i}\\
    \dyad{1}&= X_d\dyad{0}X_d^{\dagger} =\frac{1}{d } \sum_{i=0}^{d-1} O_{2,i}\\
    ...\\
    \dyad{\frac{d}{2}}&= X_d^{\frac{d}{2}}\dyad{0}X_d^{\frac{d}{2}{\dagger}} \\
    &=\frac{1}{d } \sum_{i=0}^{d-1} O(d,i)= \frac{1}{d } \sum_{i=0}^{d-1} (-1)^i O_{0,i} \\
    ...\\
    \dyad{d-1} &=  X_d^{d-1}\dyad{0}X_d^{d-1{\dagger}}\\
    & = \frac{1}{d} \sum_{i=0}^{d-1} (-1)^i O_{d-1,i}
\end{align}

Thus, we can reorder them to pairs in the following way
\begin{align}
    \dyad{0}+\dyad{\frac{d}{2}}&= \frac{1}{d} \sum_{i=0}^{d-1} (1+(-1)^i) O_{0,i}\\
    ...\\
    \dyad{k}+\dyad{k+\frac{d}{2}} &= \frac{1}{d} \sum_{i=0}^{d-1} (1+(-1)^i) O_{k,i}
\end{align}
The maximally mixed state is $\frac{\mathds{1}}{d}= \frac{1}{d}\sum_{i=0}^{d-1} \dyad{i}$.
Therefore, if we compute 
\begin{align}
    \sum_{l,m=0}^{d-1}\abs{\Tr\qty[\frac{\mathds{1}}{d} O_{l,m}]}=\frac{1}{d},
\end{align}
we get a value that is smaller than the one for a pure stabilizer state.
Therefore, one can ``hide'' magic in a product state of a magic state and the maximally mixed state or similarly including Clifford equivalent states. Since this quantity goes directly into the simulator cost, we see that the hyperpolyhedral states are easier to simulate than pure stabilizer states.

\section{Decompositions of stabilizer states}
\label{ap:dec_stab}
In this section, we show how to obtain the decompositions of stabilizer states in the basis of $O_{\bm{l},\bm{m}}$ given their stabilizers.

Every stabilizer state with a $d^n$ dimensional stabilizer group $S$ fulfills the following eigenvalue equations~\cite{gross2005finite}
\begin{align}
    \omega_d^{\bm{v}\Omega \bm{m}^T} P_d(\bm{m}) \ket{M_S,v} = \ket{M_S,v} 
\end{align}
where $M_S$ is the space of coordinates associated with the stabilizer group and $\bm{v}$ is the coordinate of one Heisenberg-Weyl operator. Note that $\bm{m} \in \mathds{Z}_d^{2n}$ and that the phases are taken care of by the phase factor $ \omega_d^{\bm{v}\Omega \bm{m}^T}$.
This $\bm{v}$  takes care of the phase in front of the Heisenberg-Weyl operator.
The stabilizer group $S$ and in turn the set of coordinates are generated by $n$ Heisenberg-Weyl operators $S=\langle S_1,...,S_n\rangle$ or $n$ coordinates $M_S=\langle\bm{s_1},...,\bm{s_n}\rangle$ respectively. The set $M_S$ includes a linear combinations involving the $n$ generators $\bm{s_i}$ with coefficients $k_i \in \mathds{Z}_d$.
The characteristic function of a stabilizer state $\dyad{\phi}$ can be represented as~\cite{gross2006hudson}
\begin{align}
    \chi_{\dyad{\phi}}^\DV(\bm{a})=\frac{1}{d^n} \omega_d^{\bm{v}^T\Omega \bm{a}} \delta_{M_S}(\bm{a})
\end{align}
where $\delta_{M_S}(\bm{a})$  is the indicator function that $\delta_{M_S}(\bm{a})=1$, iff $\bm{a} \in M_S$ and $0$ otherwise.

For odd dimensions, the phase space point operators $A(a_1,a_2)$ had a one-to-one correspondence with the operators $O_{l,m}$. Something similar holds for even dimensions as well. In that case, the operators $O_{l,m}$ are directly connected with operators $\Tilde{A}(a_1,a_2)$ that are identically defined as the phase space point operators but do not fulfill the same set of properties.

In the proof to show the connection between the phase space point operators in odd dimensions and the operators $O_{l,m}$, we used the resummation formula
\begin{align}
    \frac{1}{d}\sum_{k=0}^{d-1}e^{2\pi i \frac{k n}{d}}= \delta_{0,n}.
\end{align}

For even dimensions, the sums appear with $2d$ instead of just $d$ and then
\begin{align}
    \frac{1}{d}\sum_{k=0}^{d-1}e^{2\pi i \frac{k n}{2d}} \neq\delta_{0,n},
\end{align}
so one cannot easily use the discrete resummation formula.
However, the equation can be modified to hold in all dimensions. In even dimensions, it holds that 
\begin{align}
    \frac{1}{2d}\sum_{k=0}^{2d-1}e^{2\pi i \frac{k n}{2d}}= \frac{1}{\Tilde{d}} \sum_{k=0}^{\Tilde{d}-1} e^{2\pi i \frac{kn}{\Tilde{d}}} = \delta_{0,n}.
\end{align}
We see that by doubling the domain of the sum, we recover the discrete resummation formula.
So we can use the discrete resummation formula in all dimensions by considering
\begin{align}
    \frac{1}{D}\sum_{k=0}^{D-1}e^{2\pi i \frac{k n}{D}}= \delta_{0,n}
\end{align}
where $D$ is the quantity defined in \eqref{eq:D definition} and in consequence
\begin{align}
    \frac{1}{D^n} \sum_{\bm{x}\in \mathds{Z}_D^{2n}}       \omega_D^{-(\bm{u}+\bm{v})\Omega \bm{x}^T}= \delta_{0,\bm{u}+\bm{v}}.
    \label{eq:genresum}
\end{align}
We define the symplectic Fourier transform as 
\begin{align}
    (\mathcal{F} f)(\bm{a}) =\frac{1}{D^n}\sum_{\bm{b}\in \mathds{Z}^{2n}_D } \omega_D^{-\bm{a}^T\Omega \bm{b}} f(\bm{b}).
\end{align}
We see here again that the parameters $\bm{a}$ are not over $ \mathds{Z}^{2n}_d$ but over $ \mathds{Z}^{2n}_D$ as mentioned before.
In order to differentiate between the phase-space point operators in odd dimensions $A(\bm{a}) = \frac{1}{d^n}\sum_{\bm{b}\in \mathds{Z}^{2n}_d } \omega_d^{-\bm{a}\Omega \bm{b}^T} P_d^\dagger (\bm{b})$ with their intimate relation with the discrete Wigner function, we define the equivalently defined operator $\Tilde{A}(\bm{{a}})= \frac{1}{D^n}\sum_{\bm{b}\in \mathds{Z}^{2n}_D } \omega_D^{-\bm{a}\Omega \bm{b}^T} P^\dagger (\bm{b})$ for even dimensions. 

Then consequently for even dimension, we can rewrite $\Tilde{A}(\bm{{a}})$ as
\begin{align}
    \Tilde{A}(\bm{a}) &= \frac{1}{{D}}\sum_{\bm{b}\in \mathds{Z}^{2n}_D } \omega_D^{-\bm{a}^T\Omega \bm{b}} P_d^\dagger (\bm{b})\label{eq:atilde}\\
    &= \frac{1}{{D}}\sum_{\bm{b}\in \mathds{Z}^{2n}_D, \bm{x}\in \mathds{Z}^{n}_d } \omega_d^{\bm{b_2} (\frac{\bm{a_1}}{2}+\frac{\bm{b_1}}{2}-\bm{x})} \omega_D^{-\bm{a_2}\bm{b_1}} \ket{\bm{x}-\bm{b_1}}\bra{\bm{x}}\\
    &=\frac{1}{D}\sum_{\bm{b_1} \in \mathds{Z}^{2n}_D, \bm{x}\in \mathds{Z}^{n}_d}\delta_{\bm{b_1}, 2\bm{x}-\bm{a_1}} \omega_D^{-\bm{a_2b_1}} \ket{\bm{x}-\bm{b_1}}\bra{\bm{x}}\\
    &=\frac{1}{D} \sum_{ \bm{x}\in \mathds{Z}^{n}_d} \omega_d^{\bm{a_1a_2}/2} \omega_d^{-\bm{a_2 x}} \ket{\bm{a_1}-\bm{x}}\bra{\bm{x}} 
    \label{eq:phase point and operator basis}
\end{align}
which is equivalent to $O_{\bm{l},\bm{m}}$ for $\bm{l}=\bm{a_1}, \bm{m}=-\bm{a_2}$.

We transform the characteristic function to get the coefficients $x_{\dyad{\phi}}$  corresponding to the operators $O_{l,m}$ for a stabilizer state $\phi$ as
\begin{align}
    \mathcal{F}(\chi_{\dyad{\phi}}^\DV)(\bm{a}) &= \frac{1}{(2d)^n} \frac{1}{d^n} \sum_{\bm{b}\in \mathds{Z}^{2n}_{2d} } \omega_{2d}^{-\bm{a}^T\Omega \bm{b}} \delta_{M_s}(\bm{b}) \omega_d^{\bm{v}\Omega \bm{b}^T}\\
    &=\frac{1}{(2d)^n} \frac{1}{d^n} \sum_{\bm{b}\in M_S\cup M_S+d}\omega_{2d}^{-(\bm{a}-\bm{2\bm{v}})^T\Omega\bm{b}}.
\end{align}
We used that $M_S$ was defined on $\mathds{Z}^{2n}_d$, but the sum goes over $ \mathds{Z}^{2n}_{2d}$ so we need to take this into account when dealing with the phase factors.
We write $ M_S\cup M_S+d$ as the extension from $\mathds{Z}^{2n}_d$ to $\mathds{Z}^{2n}_{2d}$. This set is generated by the same generators $\bm{s_i}$ but includes now lienar combinations with coefficients $k_i \in \mathds{Z}_{2d}$.
We can simplify the sum by using the generators of the coordinate space $M_S=\langle \bm{s_1},...,\bm{s_n}\rangle$ to
\begin{align}
      \mathcal{F}(\chi_{\dyad{\phi}}^\DV)(\bm{a}) &=\frac{1}{(2d)^n} \frac{1}{d^n} \sum_{\bm{b}\in M_S\cup M_S+d}\omega_{2d}^{-(\bm{a}-\bm{2\bm{v}})^T\Omega\bm{b}}\\
      &=\frac{1}{(2d)^n} \frac{1}{d^n}\prod_{i=1}^n \qty(\sum_{k_i=0}^{2d-1}\omega_{2d}^{-(\bm{a}-\bm{2\bm{v}})^T\Omega \qty[k_i\bm{s_i}]}   )\\
      &=\frac{1}{(2d)^n} \frac{1}{d^n}\prod_{i=1}^n \qty(\sum_{k_i=0}^{2d-1}\omega_{2d}^{-k_i(\bm{a}-\bm{2\bm{v}})^T\Omega \bm{s_i}}   )\\
        &=\frac{1}{d^n} \delta_{M_S+2\bm{v}}(\bm{a}).
    \label{eq:coefficientsPaulicond}
\end{align}
From the first to the second line we decomposed the elements $\bm{b}\in  M_S\cup M_S+d $ using the generators $M_S\cup M_S+d$. Each element $\bm{b}$ can be decomposed into a linear combination of the generators $\bm{s_i}$ and coefficients $k_i \in \mathds{Z}_{2d}$.
In the last line, we used the resummation formula~\eqref{eq:genresum} and saw that the sum is $0$ by using except in the case where $(\bm{a}-2\bm{v})\Omega \bm{b}^T =0$ and thus the Pauli operators in the stabilizer group commute with Pauli operators with coordinates $\bm{a}-2\bm{v}$. This implies that $\bm{a}-2\bm{v} \in M_S$ since $S$ is a stabilizer group with the maximal number of commuting Pauli operators. 
As shown in \eqref{eq:phase point and operator basis}, the expression in Eq.~\eqref{eq:coefficientsPaulicond} coincides with $x_{\dyad{\phi}}(l=\bm{a}_1,m=-\bm{a_2})$.
A few comments are in order. We have shown that we can write every pure stabilizer state using $d^n$ operators $O_{\bm{l},\bm{m}}$ that all have the phase $+1$
\begin{align}
    \dyad{\phi}&=\sum_{(\bm{l},\bm{m}) \in \mathds{Z}^{2n}_{2d}} x_{\dyad{\phi}} (\bm{l},\bm{m}) O_{\bm{l},\bm{m}}\\
    &= \frac{1}{d^n}\sum_{(\bm{l},-\bm{m}) \in M_s+2\bm{v}}  O_{\bm{l},\bm{m}}.
\end{align}
Note that the sums go over $D$ and not $d$, which makes a difference in even dimensions.
As we know, the operators $O_{\bm{l},\bm{m}}$ repeat with period $d$ with the opposite sign. Let us take the example of qubits $Z_2= O_{0,1}$ while $-Z_2= O_{2,1}$. So if we constrain $(\bm{l},\bm{m}) \in \mathds{Z}^{2n}_d$ we can get phases $\pm 1$, while if we allow for all $(\bm{l},\bm{m}) \in \mathds{Z}^{2n}_D$ we get decompositions with only $+1$ signs.


\section{Characteristic function: Faithfulness}
\label{ap:faith}
In this section, we will prove the faithfulness property  for the 1-norm of the characteristic function  $\norm{\chi_{\dyad{\phi}}^\DV}_1$. The characteristic function of a states returns the coefficients of that state expanded in Pauli basis.
A pure stabilizer state is by definition stabilized by $d^n$  commuting Pauli operators and thus can be represented with $d^n$ equally weighted Pauli operators.
Thus if $\dyad{\psi}$ is a stabilizer state then $\norm{\chi_{\dyad{\phi}}^\DV}_1=1$.

We need to show the other direction as well namely that if  $\norm{\chi_{\dyad{\phi}}^\DV}_1=1$ then $\dyad{\phi}$ is a stabilizer state.
We will use the purity of the states we are considering.
The pure state has to have purity 1 and can be written using the characteristic function as
\begin{align}
    \Tr\qty[\dyad{\phi}^2]=\frac{1}{d^{n}} \sum_{\bm{a}\in \mathds{Z}_d^{2n}} \abs{\chi^\DV_{\dyad{\phi}}(\bm{a})}^2=1.
\end{align}
The characteristic function is bounded by $\abs{\chi^\DV_{\dyad{\phi}}(\bm{a})}\leq 1$.
The magic meausure we are considering is the 1-norm of the discrete characteristic function
\begin{align}
    \norm{\chi_{\dyad{\phi}}^\DV}_1= \frac{1}{d^{n}} \sum_{\bm{a}\in \mathds{Z}_d^{2n}} \abs{\chi^\DV_{\dyad{\phi}}(\bm{a})}.
\end{align}
As we have seen, $\abs{\chi_{\dyad{\phi}}^\DV}$ is 1 for $d^n$ values corresponding to the stabilizer group and is 0 otherwise. By observing the purity condition we see that  $\abs{\chi_{\dyad{\phi}}^\DV}$ needs to be non-zero for at least $d^n$ values. 
Assume now a state $\ket{\psi}$ has a decomposition including $d^n$ Pauli operators with coefficients $\pm \frac{1}{d^n}$. Such a state satisfies $\bra{\psi}P_d\ket{\psi}=1$ for every $P_d$ in the decomposition. $\ket{\phi}=P_d\ket{\psi}$ is also a proper pure quantum state so it has to hold that $\ket{\phi}=\ket{\psi}$, which immediately implies $P_d\ket{\psi}=\ket{\psi}$. This means $\ket{\psi}$ is a simultaneous $+1$ eigenstate of all $d^n$ Pauli operators, which implies that the Pauli operators are commuting. This is precisely the definition of a stabilizer state.
Thus if the characteristic function is non-zero for exactly $d^n$ values, it is a stabilizer state. Now let us assume a state that is non-zero for more than $d^n$ values. This set of values we call $R$.
The purity condition is then
\begin{align}
    \Tr\qty[\dyad{\phi}^2]=\frac{1}{d^{n}} \sum_{\bm{a}\in R} \abs{\chi^\DV_{\dyad{\phi}}(\bm{a})}^2=1
\end{align}
with $\text{card}(R)>d^n$. Since  $\text{card}(R)>d^n$ it has to hold that at $\abs{\chi^\DV_{\dyad{\phi}}(\bm{a})}<1$ for at least two $\bm{a_1},\bm{a_2}$. Assume for simplicity now that $\text{card}(R)=d^n+1$ and two $\abs{\chi^\DV_{\dyad{\phi}}(\bm{a})}<1$. 
Then it has to hold that
\bal
    \Tr\qty[\dyad{\phi}^2]&=\frac{1}{d^{n}} \sum_{\bm{a}\in R} \abs{\chi^\DV_{\dyad{\phi}}(\bm{a})}^2\\
    &= \frac{1}{d^n} \qty( d^n-1 + \abs{\chi^\DV_{\dyad{\phi}}(\bm{a_1})}^2 +\abs{\chi^\DV_{\dyad{\phi}}(\bm{a_2})}^2)\\
    &=1
\eal
and in consequence
\begin{align}
    \abs{\chi^\DV_{\dyad{\phi}}(\bm{a_1})}^2 +\abs{\chi^\DV_{\dyad{\phi}}(\bm{a_2})}^2 =1.
\end{align}
As long as $\abs{\chi^\DV_{\dyad{\phi}}(\bm{a_1})}\neq 0$ and $\abs{\chi^\DV_{\dyad{\phi}}(\bm{a_1})}\neq 0$, which would violate our assumption of 
 $\text{card}(R)>d^n$, it holds that
 \begin{align}
       \abs{\chi^\DV_{\dyad{\phi}}(\bm{a_1})} +\abs{\chi^\DV_{\dyad{\phi}}(\bm{a_2})} >1,
 \end{align}
 since $x>x^2$ for every positive real number $x<1$.
 In consequence, the 1-norm is then
 \begin{align}
    \norm{\chi_{\dyad{\phi}}^\DV}_1&= \frac{1}{d^{n}} \sum_{\bm{a}\in R} \abs{\chi^\DV_{\dyad{\phi}}(\bm{a})} \\
    &= \frac{1}{d^n}\qty(d^n+ \abs{\chi^\DV_{\dyad{\phi}}(\bm{a_1})} +\abs{\chi^\DV_{\dyad{\phi}}(\bm{a_2})})>1.
\end{align}
The same argument applies to any increase in cardinality above $d^n$ but is easiest to see for two coefficients $<1$.
This immediately gives us faithfulness for the discrete characteristic function if restricted to pure states. $\norm{\chi_{\dyad{\phi}}^\DV}_1=1$ if and only if $\dyad{\phi}$ is a pure stabilizer state.


\section{Simulating GKP}
\label{ap:Simulating GKP}

In this section, we provide additional details on the simulation algorithm for circuits that use ideal GKP codewords, Gaussian unitary operations and Gaussian measurements, i.e. homodyne detection, that we introduced in the main text.

We want to simulate the run of a quantum circuit, where we in the end obtain a sample $\bm{x}$ of a homodyne measurement.
The sample $\bm{x}$ is drawn from the probability distribution of obtaining a certain measurement outcome $\bm{x}$
\begin{align}
    P(\bm{x})=\Tr[ \Pi_{\bm{x}} U_G \rho U_G^\dagger ]
\end{align}
where $\Pi_{\bm{x}}$ describes the measurement effect of obtaining measurement outcome $\bm{x}$.
However, a GKP states are not quantum states, since they are non-normalizable. 
To differentiate between a probability $P(x)$ and the quantity we obtain by measuring an ideal GKP state, we call the latter $\Tilde{P}(x)$.

We will assume now an ideal GKP codeword encoding a multi-qudit state $\rho$. We then rewrite the ``probability'' we want to sample from as
\begin{align}
     \Tilde{P}(\bm{x})&= \int \dd \bm{r}\; W^{\CV}_{\Pi_{\bm{x}}}(\bm{r}) W^{\CV}_{U_G\rho_\GKP U_G^\dagger}(\bm{r})\\
     &=\int \dd \bm{r}\; W^{\CV}_{\Pi_{\bm{x}}}(\bm{r}) W^{\CV}_{\rho_\GKP}(S^{-1}\bm{r}+\bm{d})\\
     &= \int \dd \bm{p}\; W^{\CV}_{\rho_\GKP}(S^{-1} [\bm{p},\bm{x}]^T+\bm{d}).
\end{align}

We then insert the atomic form defined in Eq.~(\ref{eq:atomic})
\begin{align}
     \Tilde{P}(\bm{x}) &= \frac{\sqrt{d}^n}{\sqrt{8\pi}^n} \sum_{\bm{u}=-\infty}^\infty x_{\rho}(\bm{u})  \int \dd \bm{p}\;\delta(S^{-1} [\bm{p},\bm{x}]^T+\bm{d}-\sqrt{\frac{\pi}{2d}}\bm{u})\\
     &= \frac{\sqrt{d}^n}{\sqrt{8\pi}^n} \sum_{\bm{u}=-\infty}^\infty x_{\rho}(\bm{u})  \int \dd \bm{p}\;\delta( [\bm{p},\bm{x}]^T+ S \bm{d}-\sqrt{\frac{\pi}{2d}} S \bm{u}).
\end{align}
We used the following property of Dirac distributions.
It holds for Dirac distributions composed with a function $g: \mathds{R}^{2n} \rightarrow \mathds{R}^{2n}$ that
\begin{align}
    \delta\qty(g([\bm{p},\bm{x}]^T))= \sum_i \frac{\delta \qty([\bm{p},\bm{x}]^T -[\bm{x_{i,0}},\bm{p_{i,0}}]^T )}{\abs{\det[g'([\bm{x_{i,0}},\bm{p_{i,0}}]^T)]}}
\end{align}
where $g'$ is the first derivative and $[\bm{x_{i,0}},\bm{p_{i,0}}]^T$ are the roots of $g$. In our case $g$ is linear and thus $\abs{\det[g'([\bm{x_{i,0}},\bm{p_{i,0}}]^T)]}=\abs{\det[S^{-1}]}=1$, since all symplectic matrices have $\det[S]=1$.

\begin{widetext}

The coefficients $ x_{\rho}(\bm{u})$ repeat with a period of $2d$. So we write
\begin{align}
     \Tilde{P}(\bm{x})  &= \frac{\sqrt{d}^n}{\sqrt{8\pi}^n} \sum_{\bm{u}=-\infty}^\infty x_{\rho}(\bm{u})  \int \dd \bm{p}\;\delta\qty( [\bm{p},\bm{x}]^T+ S \bm{d}-\sqrt{\frac{\pi}{2d}} S \bm{u})\\
      &=\frac{\sqrt{d}^n}{\sqrt{8\pi}^n} \sum_{\bm{u}\in \mathds{Z}_{2d}^{2n}} x_{\rho}(\bm{u}) \sum_{\bm{n}\in \mathds{Z}^{2n}} \int \dd \bm{p}\;\delta\qty( [\bm{p},\bm{x}]^T+ S \bm{d}-\sqrt{\frac{\pi}{2d}} S \bm{u}-\sqrt{2\pi d}S\bm{n}).
\end{align}
\end{widetext}

Let us investigate the last integral
\begin{align}
    \sum_{\bm{n}\in \mathds{Z}^{2n}}\int \dd \bm{p}\;\delta\qty( [\bm{p},\bm{x}]^T+ S \bm{d}-\sqrt{\frac{\pi}{2d}} S \bm{u}-\sqrt{2\pi d}S\bm{n})
\end{align}
This is only $\neq 0$ iff
\begin{align}
    [\bm{p},\bm{x}]^T = \sqrt{\frac{\pi}{2d}} S \bm{u}+\sqrt{2\pi d}S\bm{n}-S\bm{d}.
\end{align}
Integration over $\bm{p}$ removes the Dirac distributions over $\bm{p}$
\begin{align}
    \sum_{\bm{n}\in \mathds{Z}^{2n}}\delta\qty( \bm{x}+ \Tr_{\bm{p}}\qty[S \bm{d}-\sqrt{\frac{\pi}{2d}} S \bm{u}-\sqrt{2\pi d}S\bm{n}])
\end{align}
To conclude
\begin{widetext}
  \begin{align}
     \Tilde{P}(\bm{x})=\frac{\sqrt{d}^n}{\sqrt{8\pi}^n} \sum_{\bm{u}\in \mathds{Z}_{2d}^{2n}} x_{\rho}(\bm{u})    \sum_{\bm{n}\in \mathds{Z}^{2n}}\delta\qty( \bm{x}+ \Tr_{\bm{p}}\qty[S \bm{d}-\sqrt{\frac{\pi}{2d}} S \bm{u}-\sqrt{2\pi d}S\bm{n}]).
\end{align}  
\end{widetext}
To ge a sample $\bm{x}$ drawn for the distribution we first sample a $\bm{u}$ according to $\frac{\abs{x_{\rho}}}{\norm{x_{\rho}}_1}$ which leads to the expression
\begin{widetext}
  \begin{align}
     \Tilde{P}(\bm{x})=\frac{\sqrt{d}^n}{\sqrt{8\pi}^n} \sum_{\bm{u}\in \mathds{Z}_{2d}^{2n}} \text{sign}( x_{\rho}(\bm{u})  )\norm{x_{\rho}}_1 \frac{\abs{x_{\rho}}(\bm{u})}{\norm{x_{\rho}}_1}    \sum_{\bm{n}\in \mathds{Z}^{2n}}\delta\qty( \bm{x}+ \Tr_{\bm{p}}\qty[S \bm{d}-\sqrt{\frac{\pi}{2d}} S \bm{u}-\sqrt{2\pi d}S\bm{n}]).
\end{align}  
\end{widetext}
Then we sample uniformly a $\bm{n} \in \mathds{Z}_{2d}^{2n}$ and return $\bm{x} = \Tr_{\bm{p}}\qty[-S \bm{d}+\sqrt{\frac{\pi}{2d}} S \bm{u}+\sqrt{2\pi d}S\bm{n}]$ as the index of the measurement result.

\bibliographystyle{apsrmp4-2}
\bibliography{myref}

\end{document}